\documentclass[11pt]{article}

\usepackage[utf8]{inputenc}
\usepackage[T1]{fontenc}
\usepackage{amsmath,amssymb,amsthm}
\usepackage{mathtools}
\usepackage{booktabs}
\usepackage{graphicx}
\graphicspath{{figures/}}
\usepackage{hyperref}
\usepackage[margin=1in]{geometry}
\usepackage{natbib}
\usepackage{multirow}
\usepackage{xcolor}
\usepackage{listings}

\lstdefinestyle{strategydoc}{
  basicstyle=\ttfamily\footnotesize,
  breaklines=true,
  breakindent=0pt,
  columns=fullflexible,
  keepspaces=true,
  xleftmargin=1.5em,
  aboveskip=0.6em,
  belowskip=0.6em,
}
\newcommand{\strategylisting}[2]{%
  \subsection*{#1}%
  \lstinputlisting{strategies/#2.md}%
}

\newtheorem{definition}{Definition}
\newtheorem{proposition}{Proposition}

\title{The Open-Strategy Dictator Game:\\
  Cooperation Under Mutual Transparency\thanks{Code, strategy
  documents, and full tournament data:
  \url{https://github.com/michaelrglass/os-fdt}.}}

\author{
  Michael Glass \\
  \texttt{michael.r.glass@gmail.com}
}

\date{August 14, 2026}

\begin{document}

\maketitle

\begin{abstract}
We introduce the \emph{Open-Strategy Dictator Game} (OSDG), a variant of the
classic dictator game in which each player's strategy is a natural-language
document visible to all participants.  The dictator's decision---to \textup{\textsc{share}} or
\textup{\textsc{take}} an endowment---may depend on the text of the recipient's strategy.
A large language model adjudicates each interaction by interpreting the
dictator's strategy in the context of the recipient's.  We run round-robin
tournaments among diverse strategies and analyze the resulting payoff matrix
using softmax equilibrium frequencies, dominance analysis, and sensitivity
to the relative value of cooperation.
Conditionally cooperative strategies---those that share with cooperators
and take from exploiters---consistently dominate, while unconditional
strategies (always share or always take) are weakly dominated.
The results suggest that in environments where agents can inspect each
other's decision procedures, conditional cooperation is
evolutionarily robust across a wide range of payoff parameters.
\end{abstract}

\section{Introduction}
\label{sec:intro}

The dictator game is a foundational paradigm in experimental economics:
one player (the dictator) unilaterally divides a fixed endowment between
herself and a passive recipient \citep{kahneman1986fairness}.
Unlike the prisoner's dilemma or ultimatum game, the recipient has no
strategic recourse---the dictator's decision is final.


Imagine two intelligent species meeting for the first time in deep space.
This first contact is unlikely to be between near-equals.
Civilizations arising independently have no reason to arrive at
spaceflight within a century of one another, and a century is
already enough to separate a species that can cross interstellar
distances from one that cannot; against the timescales on which
planets and species form, even a millennium is a rounding error.
The technologically superior species holds almost all the power --- it decides whether to share knowledge, trade fairly, or exploit the weaker civilization entirely.
Critically, the advanced species can observe the culture, communications, and decision norms of the less-advanced one.
With analytical tools consistent with interstellar capability, it can form a remarkably accurate picture of the recipient's strategy.
Perhaps many expansionist civilizations will be fully exploitive --- maximizing their own growth regardless of collateral destruction.
But if civilizations can reason about each other's possible actions and motives then such exploitation becomes risky.

We propose that \emph{strategy transparency} fundamentally changes the
incentive landscape.  When the dictator can observe and reason about
the recipient's decision procedure---and knows that future dictators
will observe theirs---the game becomes one of \emph{mutual legibility}.

To formalize this, we introduce the \textbf{Open-Strategy Dictator Game}
(OSDG).  Each player's strategy is a natural-language document specifying
how it would allocate as dictator, potentially conditioned on the text of
the recipient's strategy.  All strategies are visible to all participants.
A large language model (LLM) serves as an oracle, interpreting
each dictator's strategy in the context of the recipient's to produce a
binary decision: \textup{\textsc{share}} (split equally) or
\textup{\textsc{take}} (keep everything).

This construction is descended from Tennenholtz's \emph{program
equilibrium} \citep{tennenholtz2004program}, in which agents submit
programs that can read one another's source code.
The history of that idea motivates our design.
Its canonical constructions do not analyze the opponent's code so much
as compare it --- cooperation conditioned on syntactic equality ---
which any behaviorally equivalent rewrite defeats.
Proof-based cooperation (``share if it is provable that the opponent
reciprocates'') has been achieved only for agents written in a
purpose-built fragment of provability logic \citep{barasz2014robust};
for general programs, formal verification of an adversarially authored
opponent is hopeless in practice.
In the one open tournament where arbitrary programs received each
other's source \citep{ospd2013lesswrong}, attempted analysis failed so
thoroughly that the top three finishers ignored the opponent's code
and played randomly.
A successor tournament responded by making opponent simulation a
sandboxed language primitive \citep{pdtournament2014} --- simulation
of general programs otherwise founders on unbounded mutual regress ---
and later theory identified stochastically grounded simulation, rather
than proof, as the robust route to program equilibrium
\citep{oesterheld2019robust}.

A practical entrant today would complete this trajectory: submit a
program that sends the opponent's source to an AI agent for analysis
and acts on the resulting summary of its strategy --- a design now
studied directly \citep{sistla2025evaluating}.
But at that point the program layer is vestigial.
The artifact doing the work is the strategy description the
interpreter reads, and the open-\emph{source} game becomes the
open-\emph{strategy} game; the OSDG removes the wrapper and takes the
description itself as the object of play.
This choice also matches the epistemics of every practical setting:
what one agent knows of the strategy of an organization, an AI, or a
person is not a source listing but the product of an
\emph{investigation} --- observation, analysis, and summary.
A strategy description is the general form of such an investigation's
output, and interpreting one in context is precisely what the LLM
oracle does.
A side benefit is that the framework becomes accessible to human
participants while preserving the core mechanism of mutual
transparency.

Our contributions are:
\begin{enumerate}
  \item We define the OSDG and its payoff structure
    (Section~\ref{sec:game}).
  \item We describe an LLM-adjudicated tournament protocol for
    evaluating strategies (Section~\ref{sec:tournament}).
  \item We introduce softmax equilibrium frequencies as a model
    of both evolutionary selection and \textbf{reflective selection} and analyze the resulting equilibria
    (Section~\ref{sec:equilibria}).  
  \item We present tournament results, dominance and admissibility
    analysis, and equilibrium computations
    (Sections~\ref{sec:results}--\ref{sec:bayesian}).
\end{enumerate}

\section{The Open-Strategy Dictator Game}
\label{sec:game}

\subsection{Setup}

A tournament consists of $n$ strategies
$\mathcal{S} = \{s_1, s_2, \ldots, s_n\}$,
each a natural-language document.
A fixed endowment $E > 0$ is available in each round.

Each ordered pair $(i, j)$ defines a round in which strategy $i$ acts
as dictator and strategy $j$ as recipient.
The dictator makes a binary decision $d_{ij} \in \{\textup{\textsc{share}}, \textup{\textsc{take}}\}$:

\begin{itemize}
  \item \textup{\textsc{share}}: the endowment is split equally.
    The dictator receives $E/2$; the recipient receives $E/2$.
  \item \textup{\textsc{take}}: the dictator keeps the entire endowment.
    The dictator receives $E$; the recipient receives $0$.
\end{itemize}

Each player's utility for receiving value $v$ is
\begin{equation}
  u(v) = \ln(1 + v),
  \label{eq:utility}
\end{equation}
a concave function capturing diminishing marginal returns.
The total welfare $u(v_D) + u(v_R)$ is maximized by \textup{\textsc{share}}:
\begin{equation}
  2\ln\!\left(1 + \tfrac{E}{2}\right) > \ln(1 + E) + \ln 1 = \ln(1 + E),
\end{equation}
so the game is not zero-sum, and mutual cooperation is socially efficient.

\begin{definition}[Open-Strategy Dictator Game]
An OSDG instance is a tuple $(\mathcal{S}, E, \mathcal{O})$ where
$\mathcal{S}$ is a set of natural-language strategies,
$E > 0$ is the endowment, and
$\mathcal{O}$ is an oracle that, given
$(s_i, s_j) \in \mathcal{S}^2$, returns a decision
$d_{ij} \in \{\textup{\textsc{share}}, \textup{\textsc{take}}\}$ by interpreting
strategy $s_i$ in the context of $s_j$.
\end{definition}

\subsection{Visibility}

The critical departure from the classic dictator game is that the
dictator observes the full text of the recipient's strategy before
deciding.  This enables conditional strategies: a dictator can
cooperate with recipients whose strategies indicate reciprocity
and exploit those that do not.

Crucially, this transparency is \emph{symmetric across rounds}:
when strategy $i$ acts as dictator against $j$, it reads $s_j$;
when the roles reverse, $j$ reads $s_i$.
This creates incentives for strategies to be legibly cooperative,
since their own text will be scrutinized by future dictators.

Note that although we implement the OSDG as a round-robin tournament, the results also hold for the expected payoff of a single randomly selected round.

\subsection{Payoff matrix}

Each strategy participates in $2n$ rounds: once as dictator and once as recipient
against every other strategy and itself.

\begin{definition}[Payoff matrix]
The $n \times n$ payoff matrix $A$ has entry
\begin{equation}
  A_{ij} = u_d(d_{ij}) + u_r(d_{ji}),
  \label{eq:payoff}
\end{equation}
where $u_d$ is the dictator's utility and $u_r$ is the recipient's utility:
\begin{center}
\begin{tabular}{lcc}
  \toprule
  Decision $d_{ij}$ & Dictator utility $u_d$ & Recipient utility $u_r$ \\
  \midrule
  \textup{\textsc{share}} & $\ln(1 + E/2)$ & $\ln(1 + E/2)$ \\
  \textup{\textsc{take}}  & $\ln(1 + E)$   & $0$ \\
  \bottomrule
\end{tabular}
\end{center}
The entry $A_{ij}$ is the total payoff strategy $i$ earns in a
round-robin encounter with strategy $j$, combining both the dictator
and recipient roles.
\end{definition}

\section{LLM-Adjudicated Tournaments}
\label{sec:tournament}

\subsection{The oracle}

We instantiate the oracle $\mathcal{O}$ as a large language model:
Claude Opus 4.6 (Claude, \citealt{anthropic2025claude}) for the
tournaments reported below, with GPT-5.6 Sol
(\citealt{openai2026gpt}) and Claude Fable 5 re-adjudicating the
same pool in Section~\ref{sec:results} to check robustness to the
choice of oracle.
For each round $(i, j)$, a prompt is constructed containing:
\begin{enumerate}
  \item The game rules (binary
    \textup{\textsc{share}}/\textup{\textsc{take}} decision).
  \item The full text of the dictator's strategy $s_i$.
  \item The full text of the recipient's strategy $s_j$.
  \item Instructions to apply $s_i$ and output a JSON decision.
\end{enumerate}

The LLM returns
\texttt{\{"decision": "SHARE"\}} or \texttt{\{"decision": "TAKE"\}}.
If the response is malformed, the system retries with a corrective
prompt (up to three attempts); no round in the tournaments reported
here exhausted that budget, and every one of the 281 adjudications
in the released data records a well-formed decision.


\subsection{Tournament protocol}

A tournament proceeds as follows:
\begin{enumerate}
  \item Collect strategy documents $\mathcal{S}$.
  \item For each ordered pair $(i, j) \in \mathcal{S}^2$
    (with or without self-play), query $\mathcal{O}(s_i, s_j)$
    to obtain $d_{ij}$.
  \item Compute allocations and utilities for each round.
  \item Construct the payoff matrix $A$ and analyze equilibria.
\end{enumerate}

\subsection{Strategy archetypes}
\label{sec:archetypes}

Table~\ref{tab:archetypes} summarizes the nine strategies used in our
initial tournament: Tournament~9.  
Each is a short natural-language document ($\leq 1000$ tokens),
summarized in the table by the decision rule it implements.
The strategies are grouped by what the rule conditions on: nothing, a
trait of the recipient's document, the recipient's decision toward
oneself, or the recipient's treatment of third parties --- an
ordering that later sections make precise
(Section~\ref{sec:properties}).
A complete example of LLM adjudication is provided in
Appendix~\ref{sec:appendix-example}.

\definecolor{stratGenerous}{HTML}{C3C2B7}
\definecolor{stratSelfish}{HTML}{57554F}
\definecolor{stratIntelligence}{HTML}{4A3AA7}
\definecolor{stratAntiExploiter}{HTML}{E87BA4}
\definecolor{stratUniversalizability}{HTML}{008300}
\definecolor{stratCondCoop}{HTML}{6DA7EC}
\definecolor{stratMirror}{HTML}{2A78D6}
\definecolor{stratCoalition}{HTML}{D95926}
\definecolor{stratChivalry}{HTML}{EDA100}
\newcommand{\stratbox}[1]{\textcolor{#1}{\rule[-1.5pt]{8pt}{8pt}}}

\begin{table}[t]
\centering
\small
\renewcommand{\arraystretch}{1.25}
\begin{tabular}{@{}l p{0.64\textwidth}@{}}
\toprule
Strategy & Decision rule \\
\midrule
\multicolumn{2}{@{}l}{\emph{Unconditional --- the recipient's strategy is ignored}} \\
\stratbox{stratGenerous}~\textbf{Generous} &
  Always \textup{\textsc{share}}. The unconditional cooperator. \\
\stratbox{stratSelfish}~\textbf{Selfish} &
  Always \textup{\textsc{take}}. The unconditional defector. \\
\addlinespace
\multicolumn{2}{@{}l}{\emph{Trait tests --- classify the recipient's document}} \\
\stratbox{stratIntelligence}~\textbf{Intelligence} &
  \textup{\textsc{share}} with strategies whose decisions
  non-trivially depend on the recipient; \textup{\textsc{take}}
  otherwise. \\
\stratbox{stratAntiExploiter}~\textbf{Anti-exploiter} &
  \textup{\textsc{share}} with strategies that appear cooperative or
  kind; \textup{\textsc{take}} otherwise. \\
\stratbox{stratUniversalizability}~\textbf{Universalizability} &
  \textup{\textsc{share}} if a population of copies of the
  recipient's strategy would cooperate with each other;
  \textup{\textsc{take}} otherwise. \\
\addlinespace
\multicolumn{2}{@{}l}{\emph{Reciprocity --- condition on the recipient's decision toward oneself}} \\
\stratbox{stratCondCoop}~\textbf{Conditional cooperator} &
  \textup{\textsc{share}} with strategies that exhibit reciprocal
  cooperation norms; \textup{\textsc{take}} otherwise. \\
\stratbox{stratMirror}~\textbf{Mirror} &
  \textup{\textsc{share}} with the recipient if and only if the
  recipient would \textup{\textsc{share}} with this strategy. \\
\addlinespace
\multicolumn{2}{@{}l}{\emph{Second-order norms --- judge how the recipient treats third parties}} \\
\stratbox{stratCoalition}~\textbf{Cooperation coalition} &
  \textup{\textsc{share}} only with strategies that are cooperative
  but not ``overly generous'' (i.e., not unconditionally
  cooperative). Punishes both defectors and those who subsidize
  defectors. \\
\stratbox{stratChivalry}~\textbf{Chivalry} &
  Adopt the recipient's strategy and apply it as if facing an
  unconditional cooperator. \\
\bottomrule
\end{tabular}
\caption{The nine archetype strategies of Tournament~9, grouped by
family in order of increasing sophistication of what the decision
rule conditions on.
The swatches give each strategy's color in every figure.
The full text of each document is reproduced in
Appendix~\ref{sec:appendix-strategies}.}
\label{tab:archetypes}
\end{table}

\section{Properties of Strategies}
\label{sec:properties}

The archetypes of Section~\ref{sec:archetypes} differ along a small
number of recurring dimensions.
This section makes those dimensions precise.
Throughout, fix a strategy pool $\mathcal{S}$ of $n$ strategies and write
$d(s, r) \in \{\textup{\textsc{share}}, \textup{\textsc{take}}\}$ for
the oracle's decision when $s$ dictates to $r$; the decision profile
is the restriction of $d$ to $\mathcal{S}^2$.
Properties defined below are relative to this profile.
(When the oracle is sampled, $d(s,r)$ may be read as the majority
decision, or the definitions restated in expectation.)

\subsection{Score decomposition and the exchange rate}

Two constants govern every trade-off in the game:
the dictator's cost of sharing and the recipient's gain from being
shared with,
\begin{equation}
  \delta = u(E) - u(E/2), \qquad g = u(E/2).
  \label{eq:delta-g}
\end{equation}
Let $k_{\mathrm{give}}(s) = |\{r : d(s,r) = \textup{\textsc{share}}\}|$
and
$k_{\mathrm{recv}}(s) = |\{r : d(r,s) = \textup{\textsc{share}}\}|$.
A strategy's tournament score decomposes as
\begin{equation}
  \mathrm{Score}(s)
  = n\, u(E) \;-\; \delta\, k_{\mathrm{give}}(s)
  \;+\; g\, k_{\mathrm{recv}}(s),
  \label{eq:score-decomposition}
\end{equation}
so any comparison between strategies reduces to
$g\,\Delta k_{\mathrm{recv}} \gtrless \delta\,\Delta k_{\mathrm{give}}$.
The \emph{exchange rate}
\begin{equation}
  \rho = \frac{g}{\delta}
       = \frac{\ln(1 + E/2)}{\ln(1+E) - \ln(1 + E/2)}
  \label{eq:exchange-rate}
\end{equation}
measures how many shares given are offset by one share received;
for $E = 60$, $\rho \approx 5.07$.
Conditional cooperation is cheap in this game: a strategy can afford
to share with up to $\lfloor \rho \rfloor$ opponents for each
additional dictator it induces to share with it.
The exchange rate recurs throughout: it sets the altruism threshold
$\lambda^*= 1/\rho$ of Section~\ref{sec:bayesian}, and the
break-even point for exploiting unconditional cooperators
(Equation~\eqref{eq:protection-breakeven} below).

\subsection{First-order properties}

\begin{definition}[Unconditionality]
Strategy $s$ is \emph{unconditional} if $d(s, r)$ does not depend on
$r$.  The two constant strategies are the unconditional cooperator
(\textbf{Generous}) and the unconditional defector (\textbf{Selfish});
all other strategies are \emph{conditional}.
\end{definition}

\begin{definition}[Self-cooperation]
Strategy $s$ is \emph{self-cooperating} if
$d(s, s) = \textup{\textsc{share}}$.
\end{definition}

Since the round-robin tournament includes self-play, failing
self-cooperation costs $2u(E/2) - u(E) = g - \delta > 0$ outright.
Self-play is also the degenerate case of decision-process
correlation: a strategy's copy decides identically, so
self-cooperation is the minimal form of the FDT-style reasoning
discussed in Section~\ref{sec:related}.

\begin{definition}[Reciprocity]
Strategy $s$ is \emph{reciprocal} if $d(s, r) = d(r, s)$ for all
$r \in \mathcal{S}$: it treats every opponent exactly as that
opponent treats it (\textbf{Mirror}).
\end{definition}

\begin{definition}[Behavioral equivalence and extensionality]
Strategies $r, r'$ are \emph{behaviorally equivalent} if they have
identical rows and columns in the decision profile:
$d(r, x) = d(r', x)$ and $d(x, r) = d(x, r')$ for all $x$.
Strategy $s$ is \emph{extensional} if $d(s, r) = d(s, r')$ whenever
$r$ and $r'$ are behaviorally equivalent.
\end{definition}

An intensional strategy reacts to features of the opponent's artifact
that have no behavioral consequence --- self-labels, style, or
keyword handshakes.
The no-collusion rule of our tournaments is in effect a ban on one
intensional class (secret handshakes).

\begin{definition}[Exploitability]
Strategy $s$ is \emph{exploited by} $r$ if
$d(s, r) = \textup{\textsc{share}}$ and
$d(r, s) = \textup{\textsc{take}}$: $s$ subsidizes an opponent that
takes from it.
Relative to the profile, call $r$ an \emph{exploiter} if it exploits
some strategy in $\mathcal{S}$, and call a strategy that shares with
an exploiter a \emph{subsidizer}.
\end{definition}

\begin{definition}[First-order deterrence]
Strategy $s$ is a \emph{first-order deterrent} if
$d(s, r) = \textup{\textsc{take}}$ for every exploiter $r$.
\end{definition}

Every conditional archetype of Section~\ref{sec:archetypes} is a
first-order deterrent with respect to the unconditional defector.
By Equation~\eqref{eq:score-decomposition}, first-order deterrence is
what makes defection unprofitable: a defector's only revenue beyond
the baseline $n\,u(E)$ is $g$ per subsidizer, so in a pool of
first-order deterrents it earns the pool minimum.

\subsection{Second-order norms: deterrence and protection}
\label{sec:second-order}

First-order deterrence leaves a gap.
In a pool of conditional cooperators with defectors absent, an
unconditional cooperator earns exactly what the conditional
cooperators earn --- every entry of its row and column is
\textup{\textsc{share}} --- so nothing maintains conditionality, and
each unconditional cooperator raises the revenue of a would-be
defector by $g$.
Two archetypes in our pool respond to this second-order problem, in
opposite directions.

\begin{definition}[Second-order deterrence]
Strategy $s$ is a \emph{second-order deterrent} if it is a
first-order deterrent and additionally
$d(s, r) = \textup{\textsc{take}}$ for every subsidizer $r$.
(\textbf{Cooperation coalition}.)
\end{definition}

\begin{proposition}
\label{prop:drift}
Let $\mathcal{S}$ contain a second-order deterrent $s_2$ and a set
$C$ of mutually sharing conditional cooperators that share with
$s_2$ and with the unconditional cooperator $\bar{c}$, with $s_2$
sharing with $C$.
Then $\bar{c}$ earns strictly less than every member of $C$, by at
least $g$.
\end{proposition}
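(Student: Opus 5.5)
The plan is to work entirely through the score decomposition~\eqref{eq:score-decomposition}. For a fixed $c\in C$ it gives
\[
\mathrm{Score}(c)-\mathrm{Score}(\bar c)
= \delta\,\bigl(k_{\mathrm{give}}(\bar c)-k_{\mathrm{give}}(c)\bigr)
+ g\,\bigl(k_{\mathrm{recv}}(c)-k_{\mathrm{recv}}(\bar c)\bigr),
\]
and it suffices to show that the first bracket is nonnegative and the second is at least $1$.

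The giving term is immediate. $\bar c$ is unconditional and shares with everyone, so $k_{\mathrm{give}}(\bar c)=n\ge k_{\mathrm{give}}(c)$. The $\delta$-term is therefore $\ge 0$, and it is strictly positive whenever $c$ takes from anyone, for instance from an exploiter.

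The receiving term I will handle dictator by dictator, comparing the column entries $d(x,c)$ and $d(x,\bar c)$.
\begin{itemize}
  \item For $x\in C$: by hypothesis $x$ shares with $c$ and with $\bar c$, so the entries tie.
  \item For $x=\bar c$: both entries are \textsc{share}, so they tie.
  \item For $x=s_2$: by hypothesis $d(s_2,c)=\textsc{share}$. I will show $d(s_2,\bar c)=\textsc{take}$ via the definition of second-order deterrence. Since $\bar c$ shares with every strategy, it shares with any exploiter in the pool, so $\bar c$ is a subsidizer and $s_2$ must take from it. This one dictator supplies the required $g$.
\end{itemize}
Summing these contributions gives $k_{\mathrm{recv}}(c)-k_{\mathrm{recv}}(\bar c)\ge 1$, hence a gap of at least $g$.

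The main obstacle has two parts, and I would make both explicit in the write-up.

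First, $\bar c$ counts as a subsidizer only relative to the profile, that is, only if the pool contains some exploiter. If no exploiter exists, $s_2$ is free to share with $\bar c$ and the argument breaks; indeed $s_2$ taking from $\bar c$ would itself make $s_2$ an exploiter. I would resolve this either by assuming an exploiter is present, e.g.\ \textbf{Selfish}, which exploits $\bar c$, or by observing that this self-referential case is exactly the one where $s_2$'s take makes $s_2$ an exploiter of $\bar c$, so the profile stays consistent.

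Second, the statement leaves unspecified the dictators outside $C\cup\{s_2,\bar c\}$, such as trait tests that might favour $\bar c$. For these I would read the proposition on the pool $C\cup\{s_2,\bar c\}$ plus exploiters. Every exploiter $e$ takes from $c$; this holds in particular because $c$, being a conditional cooperator, is a first-order deterrent and takes from $e$. Such an $e$ can therefore only add $g$ to $\bar c$ when $e$ happens to share with it, and it also adds $\delta$ to the giving gap. The net per-exploiter contribution is thus at least $\delta-g\cdot[d(e,\bar c)=\textsc{share}]$, and an exploiter that shares with $\bar c$ is not exploiting it. I would close this last subcase by noting that an exploiter sharing with $\bar c$ exploits some other pool member instead, and then checking that the stated bound still holds.
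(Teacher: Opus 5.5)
Your main line is the paper's proof. You use the score decomposition, dispose of the $\delta$-term with $k_{\mathrm{give}}(\bar c)=n$ (a cleaner bound than the paper's), and let $s_2$ supply the $g$ by withholding from $\bar c$ while sharing with $c$. The paper then simply asserts $k_{\mathrm{recv}}(\bar c)\le k_{\mathrm{recv}}(c)-1$, which tacitly assumes that no dictator other than $s_2$ treats $\bar c$ better than $c$. You were right to flag this, but your repair fails, and no repair from the stated hypotheses can succeed. The claim that every exploiter $e$ takes from $c$ ``because $c$ takes from $e$'' is a non sequitur: $d(c,e)$ says nothing about $d(e,c)$. The deferred check also cannot succeed. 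Take a dictator $x\notin C\cup\{s_2\}$ with $d(x,\bar c)=\textsc{share}$ and $d(x,c)=\textsc{take}$. It moves $\mathrm{Score}(c)-\mathrm{Score}(\bar c)$ by $-g$ on the receiving side and by at most $+\delta$ on the giving side, a net loss of at least $g-\delta>0$. Enough such dictators push the gap below $g$ and then below zero. The hypotheses do not exclude them, so ``every dictator other than $s_2$ that shares with $\bar c$ also shares with $c$'' has to be added as an assumption rather than derived.

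The missing-exploiter issue is also more serious than you allow, and your ``self-referential'' option refutes the key step instead of rescuing it. If $d(s_2,\bar c)=\textsc{take}$, then $s_2$ exploits $\bar c$. Every $c\in C$ shares with $s_2$ and so is a subsidizer, and second-order deterrence then forces $d(s_2,c)=\textsc{take}$, contradicting the hypothesis. So the profile does not stay consistent. Read literally (with $C\neq\emptyset$), the hypotheses therefore force $d(s_2,\bar c)=\textsc{share}$. Your first option, assuming an exploiter such as \textbf{Selfish} is present, makes the hypotheses inconsistent. In an exploiter-free pool, every take is mutual and nobody takes from $\bar c$, so $\mathrm{Score}(c)-\mathrm{Score}(\bar c)=t(\delta-g)\le 0$, where $t$ is the number of strategies $c$ takes from. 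The argument you share with the paper therefore goes through only under its intended reading: the exploiters are the defectors, and $s_2$'s punitive takes do not make it one. It also needs a defector to be present and the extra assumption above on the other dictators. State these explicitly, and your column-by-column comparison then completes the proof.
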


\begin{proof}
$\bar{c}$ and any $c \in C$ have $k_{\mathrm{give}}$ differing by at
most the defectors and subsidizers that $\bar{c}$ shares with and $c$
does not, each such round \emph{raising}
$\mathrm{Score}(c) - \mathrm{Score}(\bar{c})$ by $\delta$ via
Equation~\eqref{eq:score-decomposition}.
On the receiving side, $s_2$ takes from $\bar{c}$ (a subsidizer, as
$\bar{c}$ shares with every exploiter) and shares with $c$, so
$k_{\mathrm{recv}}(\bar{c}) \le k_{\mathrm{recv}}(c) - 1$.
Hence
$\mathrm{Score}(c) - \mathrm{Score}(\bar{c}) \ge g$.
\end{proof}

Second-order deterrence thus closes the neutral-drift channel: it
makes unconditional cooperation strictly suboptimal even when
defectors are extinct, stabilizing the all-conditional configuration.
The cost is that the punishment falls on the most cooperative
strategies in the pool whenever they do appear.

\begin{definition}[Protection]
Strategy $s$ is \emph{protective} if
$d(s, r) = d(r, \bar{c})$ for all $r$, where $\bar{c}$ is the
unconditional cooperator: $s$ treats each opponent as that opponent
would treat an innocent.
(\textbf{Chivalry}.)
\end{definition}

A protective strategy addresses the same second-order problem by
making the \emph{exploitation} of unconditional cooperators costly
rather than their existence.
Exploiting one unconditional cooperator harvests $\delta$; each
protector responds by withholding its share, costing $g$.
Harvesting is therefore profitable only if the pool contains enough
innocents per protector:
\begin{equation}
  \#\{\text{unconditional cooperators harvested}\}
  \;>\; \rho \cdot \#\{\text{protectors}\}.
  \label{eq:protection-breakeven}
\end{equation}

The two norms are mutually antagonistic.
An unconditional cooperator is a subsidizer, so a second-order
deterrent takes from it; a protector, applying the innocent test to
the second-order deterrent, finds
$d(s_2, \bar{c}) = \textup{\textsc{take}}$ and takes from $s_2$ in
turn.
Each second-order norm punishes precisely the behavior the other
mandates toward the unconditionally generous.
Both stabilize full cooperation among conditional strategies; they
differ in whether unconditional cooperators end up extinct or
protected.
This antagonism is not a curiosity: it generates the two families of
equilibria --- enforcement and protection --- that appear in the
Bayesian population analysis of Section~\ref{sec:bayesian}, and
Equation~\eqref{eq:protection-breakeven} determines which family a
given pool composition favors.

\subsection{Legibility and well-foundedness:
open strategies versus open source}
\label{sec:properties-oracle}

Two further properties matter greatly in practice but are
deliberately excluded from the list above: they are properties of the
strategy--oracle pair, not of the decision profile, and they are
where the LLM-adjudicated tournament differs most from its
predecessors in which formal programs analyze other programs
\citep{tennenholtz2004program, ospd2013lesswrong}.

\paragraph{Legibility.}
By Equation~\eqref{eq:score-decomposition}, a strategy's score
depends on $k_{\mathrm{recv}}$ --- on how \emph{other} dictators
classify its artifact --- and the exchange rate makes errors on this
side roughly $\rho$ times as costly as errors the strategy makes as
dictator.
A strategy is \emph{legible} to a class of conditional dictators if
they classify it as the strategy's behavior warrants.
In program-equilibrium settings legibility is provability:
cooperation obtains only when a machine-checkable proof about the
opponent's source exists, so behaviorally equivalent but
syntactically distinct programs defeat recognition, and cooperation
is brittle by construction.
LLM adjudication replaces proof with judgment: legibility becomes
graded and robust to paraphrase, at the price of classification
noise.
Misreadings replace impossibility results --- an aggressive
classification rule that would simply fail to find proofs in the
formal setting instead occasionally misfires against cooperative
opponents in ours, a failure mode we return to in
Section~\ref{sec:discussion}.

\paragraph{Well-foundedness.}
Mutual conditionality is self-referential.
When two reciprocal strategies meet, ``share iff they would share
with me'' admits two consistent resolutions --- mutual
\textup{\textsc{share}} and mutual \textup{\textsc{take}} --- and no
finite unwinding selects between them.
A strategy is \emph{well-founded} if its decision against every
opponent in the pool is determined by a finite regress.
In the formal-program setting this is the L\"obian obstacle, and
resolving it requires provability-logic constructions engineered for
the purpose \citep{barasz2014robust} or simulation with stochastic
grounding \citep{oesterheld2019robust}.
In our setting the oracle itself acts as the fixed-point selector,
and empirically selects the cooperative resolution --- a disposition
of the arena, not a property of any strategy.
Strategies can also restore well-foundedness on their own:
\textbf{Universalizability} grounds the regress at depth one by
evaluating a population of the opponent's copies, and
\textbf{Chivalry} grounds it by substituting the unconditional
cooperator into the recursion, which is what makes the protective
norm of Section~\ref{sec:second-order} implementable at all.

\section{Tournament Results: Decisions and Payoffs}
\label{sec:results}


\newcommand{\dS}{S}
\newcommand{\dT}{{\color{red!60!black}\textbf{T}}}

\definecolor{payExploited}{HTML}{86B6EF} 
\definecolor{payAlone}{HTML}{3987E5}     
\definecolor{payMutual}{HTML}{1C5CAB}    
\definecolor{payExploiter}{HTML}{0D366B} 
\newcommand{\paybox}[1]{\textcolor{#1}{\rule[-1.5pt]{8pt}{8pt}}}
\newcommand{\pL}{\paybox{payExploited}}
\newcommand{\pA}{\paybox{payAlone}}
\newcommand{\pM}{\paybox{payMutual}}
\newcommand{\pH}{\paybox{payExploiter}}

\begin{table}[t]
\centering
\small
\begin{tabular}{l*{9}{c}}
\toprule
& \multicolumn{9}{c}{Recipient} \\
\cmidrule(lr){2-10}
Dictator & 1 & 2 & 3 & 4 & 5 & 6 & 7 & 8 & 9 \\
\midrule
1.~Mirror                 & \dS & \dS & \dS & \dS & \dS & \dS & \dS & \dS & \dT \\
2.~Cooperation coalition  & \dS & \dS & \dT & \dS & \dT & \dT & \dT & \dT & \dT \\
3.~Chivalry               & \dS & \dT & \dS & \dT & \dS & \dS & \dT & \dS & \dT \\
4.~Conditional cooperator & \dS & \dS & \dS & \dS & \dS & \dS & \dS & \dT & \dT \\
5.~Anti-exploiter         & \dS & \dS & \dS & \dS & \dS & \dS & \dS & \dS & \dT \\
6.~Universalizability     & \dS & \dS & \dS & \dS & \dS & \dS & \dS & \dS & \dT \\
7.~Intelligence           & \dS & \dS & \dS & \dS & \dS & \dS & \dS & \dT & \dT \\
8.~Generous               & \dS & \dS & \dS & \dS & \dS & \dS & \dS & \dS & \dS \\
9.~Selfish                & \dT & \dT & \dT & \dT & \dT & \dT & \dT & \dT & \dT \\
\bottomrule
\end{tabular}
\caption{Decision matrix for Tournament~9: the oracle's decision when the
row strategy dictates to the column strategy
(\dS{} = \textsc{share}, \dT{} = \textsc{take}).
Strategies are numbered by final score; column numbers refer to the same
ordering.
Beyond the two unconditional rows, \textsc{take} decisions trace the
second-order norms of Section~\ref{sec:second-order}:
\textbf{Cooperation coalition} takes from the unconditional cooperator and
from every strategy that subsidizes defectors, while \textbf{Chivalry}
takes from exactly the strategies that would take from an unconditional
cooperator.}
\label{tab:decision-matrix}
\end{table}

\begin{table}[t]
\centering
\small
\renewcommand{\arraystretch}{1.15}
\begin{tabular}{l*{9}{c}cc}
\toprule
& \multicolumn{9}{c}{Opponent} & & \\
\cmidrule(lr){2-10}
Strategy & 1 & 2 & 3 & 4 & 5 & 6 & 7 & 8 & 9
  & Total & Dominated by \\
\midrule
1.~Mirror                 & \pM & \pM & \pM & \pM & \pM & \pM & \pM & \pM & \pA & \textbf{59.055} & --- \\
2.~Cooperation coalition  & \pM & \pM & \pA & \pM & \pH & \pH & \pH & \pH & \pA & \textbf{59.005} & --- \\
3.~Chivalry               & \pM & \pA & \pM & \pH & \pM & \pM & \pH & \pM & \pA & \textbf{57.651} & --- \\
4.~Conditional cooperator & \pM & \pM & \pL & \pM & \pM & \pM & \pM & \pH & \pA & \textbf{56.298} & 2 \\
5.~Anti-exploiter         & \pM & \pL & \pM & \pM & \pM & \pM & \pM & \pM & \pA & \textbf{55.621} & 1, 3 \\
6.~Universalizability     & \pM & \pL & \pM & \pM & \pM & \pM & \pM & \pM & \pA & \textbf{55.621} & 1, 3 \\
7.~Intelligence           & \pM & \pL & \pL & \pM & \pM & \pM & \pM & \pH & \pA & \textbf{52.864} & 2, 4 \\
8.~Generous               & \pM & \pL & \pM & \pL & \pM & \pM & \pL & \pM & \pL & \textbf{48.076} & 1, 3, 5, 6 \\
9.~Selfish                & \pA & \pA & \pA & \pA & \pA & \pA & \pA & \pH & \pA & \textbf{40.432} & 2 \\
\bottomrule
\end{tabular}
\caption{Payoff matrix $A$ for Tournament~9
(Equation~\eqref{eq:payoff}): the total payoff the row strategy earns
against the column strategy across its dictator and recipient rounds;
strategies are numbered by final score as in
Table~\ref{tab:decision-matrix}.
Only four values occur, shown light to dark in increasing order:
\pL~$\ln 31 \approx 3.43$ (shares, not shared with);
\pA~$\ln 61 \approx 4.11$ (takes, not shared with);
\pM~$2\ln 31 \approx 6.87$ (mutual share);
\pH~$\ln 61 + \ln 31 \approx 7.55$ (takes while being shared with).
Row sums give the leaderboard.
Because darker is always better, weak dominance
(Section~\ref{sec:dominance}) is visible directly:
strategy $i$ weakly dominates strategy $k$ if row $i$ is nowhere lighter
than row $k$; the last column lists the strategies that weakly dominate
each row.
Six of the nine strategies are weakly dominated; the undominated set is
exactly the top three --- \textbf{Mirror}, \textbf{Cooperation coalition},
and \textbf{Chivalry}.
\textbf{Mirror} beats \textbf{Cooperation coalition} by
$0.050 \approx g - 5\delta$: the coalition harvests five subsidizers
at $\delta$ each but loses the single share $g$ withheld by
\textbf{Chivalry}, the one strategy in the pool that retaliates.}
\label{tab:payoff-matrix}
\end{table}

Table~\ref{tab:decision-matrix} shows the oracle's decision for every
ordered pair, and Table~\ref{tab:payoff-matrix} the resulting payoff
matrix; its row sums are the final leaderboard.

\paragraph{Oracle agreement.}
To check robustness to the choice of oracle, we re-adjudicated the
pool with two further frontier models, one from a different lab:
GPT-5.6 Sol and Claude Fable 5.\footnote{Eight of Claude Fable 5's
rounds, all involving \textbf{Chivalry}'s document, were declined by
its cybersecurity safety classifiers --- false positives on benign
strategy text --- and were served by a Claude Opus 4.8 fallback;
these cells are marked in the released data.}
Cell-level agreement with Table~\ref{tab:decision-matrix} is $79/81$
($97.5\%$) for Sol and $78/81$ ($96.3\%$) for Fable, with $93.8\%$
agreement between the two.
Every cell of Table~\ref{tab:decision-matrix} is the majority
decision of the three oracles.
The five cells with any disagreement are all borderline
second-order classifications --- whether \textbf{Intelligence}'s
conditionality counts as a reciprocal cooperation norm, whether
\textbf{Anti-exploiter} and \textbf{Universalizability} count as
subsidizers the coalition should punish, and how \textbf{Chivalry}
and the conditional cooperator classify each other --- the
classification-noise channel anticipated in
Section~\ref{sec:properties-oracle}.

\section{Dominance}
\label{sec:dominance}

\begin{definition}[Strict dominance]
Strategy $i$ is \emph{strictly dominated} if there exists a mixed
strategy $\mathbf{p}$ with $p_i = 0$ such that
$\sum_j p_j A_{jk} > A_{ik}$ for all $k$.
\end{definition}

\begin{definition}[Weak dominance]
Strategy $i$ is \emph{weakly dominated} if there exists a mixed
strategy $\mathbf{p}$ with $p_i = 0$ such that
$\sum_j p_j A_{jk} \geq A_{ik}$ for all $k$
with strict inequality for at least one $k$.
\end{definition}

\subsection{Dominance in Tournament~9}
\label{sec:dominance-results}

Applying these definitions to the Tournament~9 payoff matrix
(Table~\ref{tab:payoff-matrix}) yields no strictly dominated
strategies and six weakly dominated ones, each dominated by pure
strategies:
\textbf{Selfish} is weakly dominated by \textbf{Cooperation
coalition}, and \textbf{Generous} by \textbf{Mirror},
\textbf{Chivalry}, \textbf{Anti-exploiter}, and
\textbf{Universalizability} --- both unconditional strategies are
weakly dominated.
\textbf{Intelligence} is weakly dominated by \textbf{Conditional
cooperator} and \textbf{Cooperation coalition}, and
\textbf{Conditional cooperator} in turn by \textbf{Cooperation
coalition}.
Finally, the payoff-equivalent pair \textbf{Anti-exploiter} and
\textbf{Universalizability} is weakly dominated by \textbf{Mirror}
and \textbf{Chivalry}.
The undominated set is exactly the top three of the leaderboard:
\textbf{Mirror}, \textbf{Cooperation coalition}, and
\textbf{Chivalry}.

\subsection{Admissibility, weak dominance, and Nash equilibrium}
\label{sec:admissibility}

Weak dominance also clarifies which solution concepts fit the
open-strategy setting, and Nash equilibrium fits poorly for a reason
the tournament matrices make concrete: it ignores weak dominance.
In the Tournament~9 matrix, the profile in which every entrant is the
unconditional defector is a pure symmetric Nash equilibrium --- against
a defector, every strategy earns the same payoff, so no unilateral
deviation gains --- even though the unconditional defector is weakly
dominated on the pool.
Enumerating mixed symmetric equilibria compounds the problem,
returning large families of payoff-equivalent supports with nothing to
select among them.
Classical theory repairs this with refinements: \emph{admissibility}
--- never play a weakly dominated strategy \citep{luce1957games} ---
and trembling-hand perfection, which excludes weakly dominated
strategies from equilibrium play \citep{selten1975reexamination}.

In the open-strategy setting no refinement machinery is needed.
An entrant cannot distinguish the other entrants before observing
their submissions; represent that uncertainty by a belief
$\mathbf{f}$ over the strategy pool --- a reading developed fully in
Section~\ref{sec:bayesian}.
Under any full-support belief, admissibility is a one-line
consequence.

\begin{proposition}
\label{prop:admissibility}
Let $\mathbf{f}$ have full support on the pool and let strategy $i$ be
weakly dominated by the mixture $\mathbf{p}$.
Then $\sum_j p_j (A\mathbf{f})_j > (A\mathbf{f})_i$.
\end{proposition}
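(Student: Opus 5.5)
The plan is to expand both sides by linearity and reduce everything to a comparison of columns of $A$. Define the column-wise gap $D_k = \sum_j p_j A_{jk} - A_{ik}$ for each opponent $k$.

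Exchanging the order of summation gives
\[
\sum_j p_j (A\mathbf{f})_j - (A\mathbf{f})_i = \sum_j p_j \sum_k A_{jk} f_k - \sum_k A_{ik} f_k = \sum_k f_k D_k .
\]
So the claim becomes the statement that a weighted sum of the gaps $D_k$, with weights $f_k$, is strictly positive.

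The hypotheses supply two facts about these gaps.

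\begin{itemize}
  \item By the definition of weak dominance, $D_k \ge 0$ for every $k$, and there is some $k^*$ with $D_{k^*} > 0$.
  \item Full support means $f_k > 0$ for all $k$, and in particular $f_{k^*} > 0$.
\end{itemize}

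Every term $f_k D_k$ is therefore nonnegative. The term at $k^*$ is strictly positive, so the sum is at least $f_{k^*} D_{k^*} > 0$, which is exactly the claim.

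There is no real obstacle here. The one place the argument needs care is that full support is genuinely required. If $\mathbf{f}$ put zero weight on $k^*$, the strict gain could vanish. This is precisely the all-defector Nash profile discussed above: it concentrates $\mathbf{f}$ on \textbf{Selfish}, where every strategy ties. Noting this explains why the proposition excludes weakly dominated strategies under every full-support belief, while Nash equilibrium does not.
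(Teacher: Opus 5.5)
Your proof is correct and matches the paper's argument: weak dominance makes every column gap nonnegative with at least one strictly positive, and full support keeps that strict gap alive in the $\mathbf{f}$-weighted sum. One small correction to your closing aside: against \textbf{Selfish}, \textbf{Generous} earns $\ln 31$ while the other strategies earn $\ln 61$, so ``every strategy ties'' is too strong; your point still holds, because the relevant pair, \textbf{Selfish} and its dominator \textbf{Cooperation coalition}, does tie there.
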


\begin{proof}
Weak dominance gives $\sum_j p_j A_{jk} \ge A_{ik}$ for all $k$ with
strict inequality for some $k'$; taking expectations under
$\mathbf{f}$, the strict inequality survives because $f_{k'} > 0$.
\end{proof}

An author with any full-support belief over the candidate pool
therefore never selects a weakly dominated strategy, whatever the tail
of the belief looks like --- precisely the intuition that with
uncertainty about what others will play, weak dominance over the
pool is disqualifying.
In the equilibrium models of
Sections~\ref{sec:equilibria} and~\ref{sec:bayesian} the exclusion is
automatic: every logit fixed point lies in the interior of the
simplex, so equilibrium beliefs have full support at every $\beta$,
Proposition~\ref{prop:admissibility} applies, and the dominated
strategy earns strictly less --- and hence receives strictly lower
frequency --- than its dominator.
Degenerate profiles such as all-defect are not fixed points of the
mixture equation~\eqref{eq:mixture-fixed-point} for any $\beta > 0$.
The Nash artifacts are avoided not by imposing a refinement but
because belief-based choice with residual uncertainty enforces
admissibility by construction.

Two qualifications keep the claim honest.
First, admissibility inherits the pool-relativity of dominance: a
strategy inadmissible on one candidate pool may be admissible on a
larger one (Section~\ref{sec:bayesian}), so the criterion is always
``undominated over the set of plausible strategies,'' never over the
game in the abstract.
Second, dominance comparisons are objective-relative: for an author of
type $\lambda$ (Section~\ref{sec:bayesian}) the relevant payoff
matrix is $(1-\lambda)A + \lambda W$, and a strategy weakly dominated
for own payoff need not be dominated for total welfare.

\section{Equilibrium Analysis}
\label{sec:equilibria}

\subsection{Population model}

Given a frequency vector
$\mathbf{f} \in \Delta^{n-1}$
(the probability simplex: $f_i \geq 0$, $\sum_i f_i = 1$),
the expected payoff of strategy $i$ is
\begin{equation}
  R_i(\mathbf{f}) = \sum_j f_j A_{ij} = (A\mathbf{f})_i.
  \label{eq:reward}
\end{equation}

\subsection{Softmax equilibrium}

We model evolutionary selection pressure through a softmax
fixed-point equation with inverse temperature $\beta > 0$:
\begin{equation}
  f_i^* = \frac{e^{\beta\, R_i(\mathbf{f}^*)}}
               {\sum_k e^{\beta\, R_k(\mathbf{f}^*)}}.
  \label{eq:softmax}
\end{equation}

This is a fixed-point equation
$\mathbf{f}^* = \mathrm{softmax}(\beta\, A \mathbf{f}^*)$,
solved iteratively:
\begin{equation}
  \mathbf{f}^{(t+1)} = \mathrm{softmax}\!\big(\beta\, A \mathbf{f}^{(t)}\big),
  \label{eq:iteration}
\end{equation}
starting from an initial distribution $\mathbf{f}^{(0)}$.

The parameter $\beta$ interpolates between uniform frequencies
($\beta \to 0$, no selection) and concentration on the
highest-payoff strategy ($\beta \to \infty$, approaching
best-response dynamics).

\begin{proposition}
For any $\beta > 0$ and payoff matrix $A$, a softmax fixed point
$\mathbf{f}^*$ exists in the interior of the simplex.
\end{proposition}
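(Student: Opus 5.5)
The plan is to apply Brouwer's fixed-point theorem to the map $T(\mathbf{f}) = \mathrm{softmax}(\beta A\mathbf{f})$ on the simplex $\Delta^{n-1}$, and then read off interiority directly from the form of softmax.

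First, I check the hypotheses of Brouwer. $\Delta^{n-1}$ is nonempty, compact, and convex. The map $\mathbf{f} \mapsto \beta A\mathbf{f}$ is linear, hence continuous, from $\mathbb{R}^n$ to $\mathbb{R}^n$. Softmax $\mathbf{x} \mapsto \bigl(e^{x_i}/\sum_k e^{x_k}\bigr)_i$ is continuous on all of $\mathbb{R}^n$, since the denominator is a sum of positive exponentials and never vanishes. Its output has nonnegative entries summing to one, so it lies in $\Delta^{n-1}$. Thus $T:\Delta^{n-1}\to\Delta^{n-1}$ is continuous, and Brouwer's theorem yields some $\mathbf{f}^*$ with $T(\mathbf{f}^*) = \mathbf{f}^*$. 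This is exactly Equation~\eqref{eq:softmax}.

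Second, I establish interiority. Every entry of a softmax output is strictly positive, because $e^{x}>0$ for every real $x$. So $f_i^* = T(\mathbf{f}^*)_i > 0$ for all $i$, and $\mathbf{f}^*$ lies in the relative interior of the simplex. For a quantitative version, I would bound the entries of $A\mathbf{f}$ by $m=\min_{ij}A_{ij}$ and $M=\max_{ij}A_{ij}$. This gives
\[
f_i^* \ge \frac{e^{\beta m}}{n\,e^{\beta M}} = \frac{1}{n}\,e^{-\beta(M-m)},
\]
so the image of $T$ lies in a compact subset of the open simplex. This bound is what supports the claim in Section~\ref{sec:admissibility} that degenerate profiles are never fixed points.

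I do not expect a genuine obstacle, since this is the standard existence argument for logit quantal response equilibria. The only point needing care is stating that ``interior'' means the relative interior of $\Delta^{n-1}$, which is handled by the explicit lower bound above. I will also remark that the argument gives existence but neither uniqueness nor convergence of the iteration~\eqref{eq:iteration}. If that were wanted, I would note that $T$ is a contraction in $\ell_1$ when $\beta\,\max_{ij}|A_{ij}-A_{kj}|$ is small enough (softmax has Lipschitz constant at most $1/2$ in suitable norms). That gives uniqueness for small $\beta$ but not in general, and I would not claim it in the proposition.
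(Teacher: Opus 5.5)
Your proof is correct and matches the paper's argument: Brouwer's theorem applied to the continuous self-map $\mathbf{f}\mapsto\mathrm{softmax}(\beta A\mathbf{f})$ of the compact convex simplex, with interiority read off from the strict positivity of softmax outputs. Your explicit lower bound $f_i^*\ge \tfrac{1}{n}e^{-\beta(M-m)}$, with $m$ and $M$ the extreme entries of $A$, is a valid quantitative strengthening that the paper does not state, though strict positivity alone already suffices for the later claim that degenerate profiles are never fixed points.
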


\begin{proof}
The map $\mathbf{f} \mapsto \mathrm{softmax}(\beta A \mathbf{f})$
is continuous from the compact convex set $\Delta^{n-1}$ to
the interior of $\Delta^{n-1}$.
By Brouwer's fixed-point theorem, a fixed point exists.
Since the softmax function always returns strictly positive entries,
$f_i^* > 0$ for all $i$.
\end{proof}

Fixed points need not be unique.
We use a multi-start approach, sampling initial distributions from
a Dirichlet prior and clustering the resulting fixed points to
identify distinct basins of attraction.

\subsection{Softmax equilibria of Tournament~9}
\label{sec:results-softmax}

\begin{figure}[!tp]
\centering
\includegraphics[width=\textwidth]{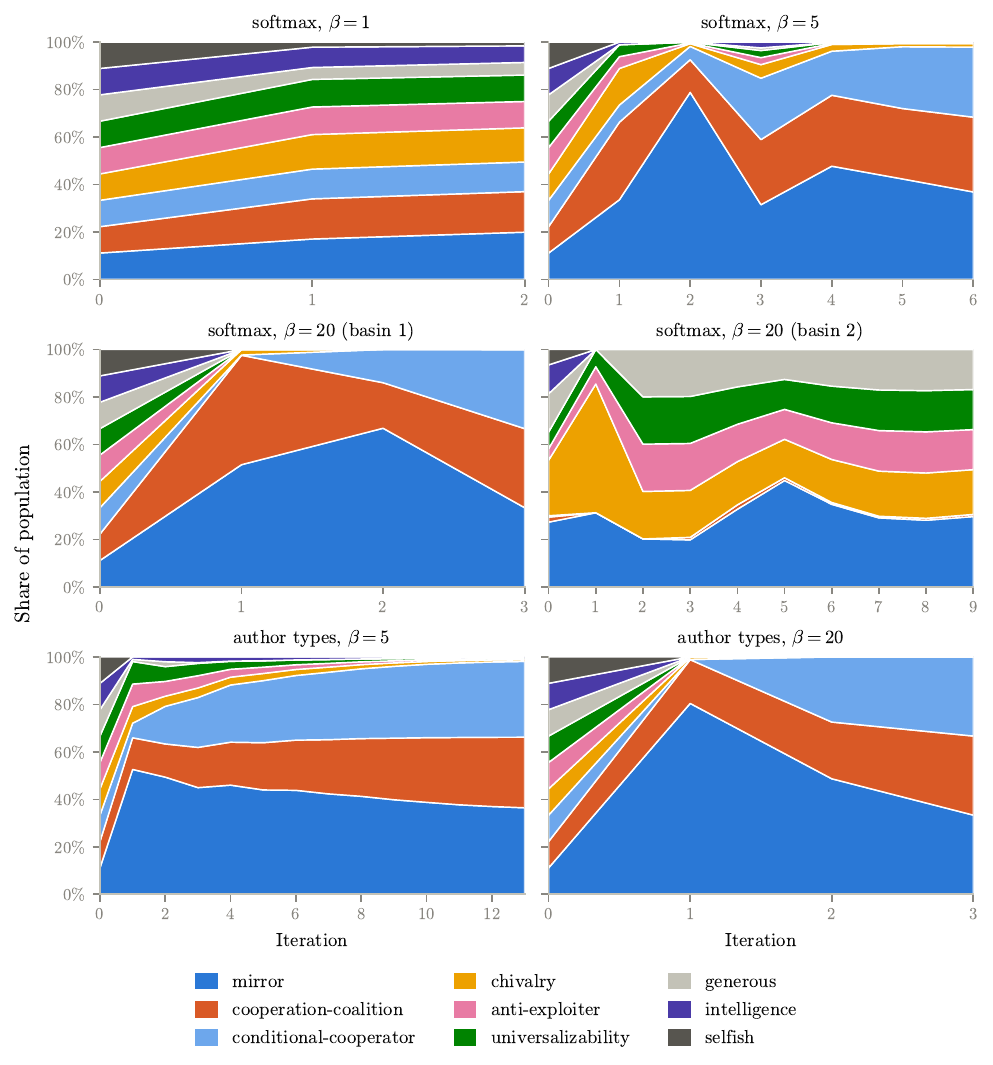}
\caption{Strategy frequencies along the fixed-point iteration of
Equation~\eqref{eq:iteration} (top two rows) and the author-type
mixture model of Section~\ref{sec:bayesian} (bottom row; type masses
$0.4$, $0.4$, $0.2$ with $\lambda = 0$, $\tfrac12$, $1$).
Each panel starts at iteration~$0$ from its initial distribution and
stops at the first iterate within $0.01$ ($L_\infty$) of the fixed
point.
Every panel starts from the uniform distribution except
\emph{softmax, $\beta = 20$ (basin~2)}, which starts from a sampled
distribution inside the protection basin
(Section~\ref{sec:results-basins}).
Hue encodes strategy family: greys are the unconditional strategies
(light = \textbf{Generous}, dark = \textbf{Selfish}), warm hues are the
two second-order norms of Section~\ref{sec:second-order}
(\textbf{Cooperation coalition} orange, \textbf{Chivalry} gold), blues
are the direct reciprocators (\textbf{Mirror},
\textbf{Conditional cooperator} light blue), and
\textbf{Universalizability} (green), \textbf{Intelligence} (violet),
and \textbf{Anti-exploiter} (pink) stand alone.}
\label{fig:trajectories}
\end{figure}

Figure~\ref{fig:trajectories} traces the fixed-point iteration from
the uniform distribution at three selection pressures, together with,
in the bottom row, the author-type mixture model developed in
Section~\ref{sec:bayesian}.
Convergence is rapid throughout: every configuration shown is within
$1\%$ of its fixed point after at most $13$ iterations.

At $\beta = 1$ selection is weak and the fixed point merely reorders
the near-uniform field by the leaderboard, from \textbf{Mirror} at
$0.198$ down to \textbf{Selfish} at $0.016$.
At $\beta = 5$ the field concentrates on
\textbf{Mirror} ($0.372$), \textbf{Cooperation coalition} ($0.314$),
and \textbf{Conditional cooperator} ($0.296$); every other strategy
falls below $0.015$, and the transient briefly overshoots
(\textbf{Mirror} touches $0.79$ at iteration~$2$ before receding).
At $\beta = 20$ the same three strategies split the population into
almost exact thirds---when the iteration is started from the uniform
distribution.
High selection pressure, however, introduces a second attractor, shown
in the basin-$2$ panel and analyzed next.

The author-type panels in the bottom row are discussed with the
mixture model itself, in Section~\ref{sec:bayesian}.

\subsection{Basins of attraction: enforcement versus protection}
\label{sec:results-basins}

At $\beta = 20$, multi-start iteration finds two attracting fixed
points, and they realize the two families of equilibria anticipated
by the second-order analysis of Section~\ref{sec:second-order}%
\footnote{Multi-start iteration can only find \emph{attracting}
fixed points: an unstable fixed point is reached from at most a
measure-zero set of starts.  Newton root-finding on
$\mathrm{softmax}(\beta A \mathbf{f}) - \mathbf{f} = \mathbf{0}$
locates a third, unstable fixed point (\textbf{Mirror} at $0.744$,
the remainder spread over the protection support; spectral radius
$2.1$ at the fixed point), the saddle separating the two basins.}:

\begin{description}
  \item[Enforcement (basin 1).] \textbf{Mirror},
    \textbf{Cooperation coalition}, and \textbf{Conditional
    cooperator} at one third each; all other strategies below
    $0.005$.  The unconditional cooperator is extinct.
  \item[Protection (basin 2).] \textbf{Mirror} ($0.300$),
    \textbf{Chivalry} ($0.188$), and \textbf{Anti-exploiter},
    \textbf{Universalizability}, and \textbf{Generous} at $0.168$
    each; \textbf{Cooperation coalition} at $0.008$.  The
    unconditional cooperator survives, protected, at the same
    frequency as the other cooperators outside the norm conflict.
\end{description}

\noindent
Each support contains weakly dominated strategies
(Table~\ref{tab:payoff-matrix}); their disadvantage is inactive
because the strategy that punishes them is extinct in that basin,
consistent with the relative notion of admissibility of
Section~\ref{sec:admissibility}.

\begin{figure}[!t]
\centering
\begin{minipage}[c]{0.55\textwidth}
\centering
\includegraphics[width=\linewidth]{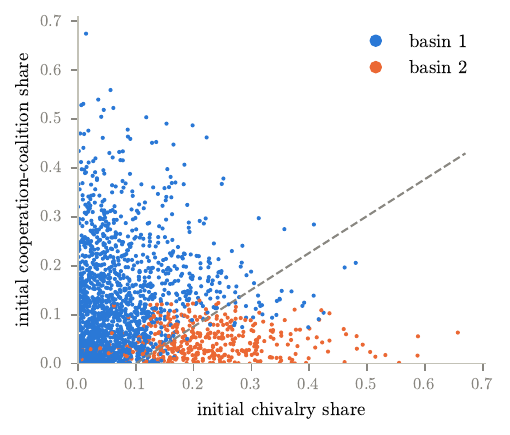}
\end{minipage}\hfill
\begin{minipage}[c]{0.41\textwidth}
\caption{Basin of attraction at $\beta = 20$ for $2{,}000$ initial
distributions drawn uniformly from the simplex, projected onto the
initial shares of the two second-order norms.
Blue points converge to the enforcement equilibrium (basin~1,
$79.6\%$ of starts), orange points to the protection equilibrium
(basin~2, $20.4\%$).
The dashed line is the fitted logistic boundary; these two coordinates
alone predict the outcome for $91\%$ of starts (all nine coordinates:
$94\%$).}
\label{fig:basin-map}
\end{minipage}
\end{figure}

Which basin a start falls into is nearly a two-dimensional question.
Figure~\ref{fig:basin-map} projects $2{,}000$ uniformly sampled
initial distributions onto the initial shares of \textbf{Chivalry}
and \textbf{Cooperation coalition}: a logistic classifier on these
two coordinates predicts the basin for $91\%$ of starts, against
$94\%$ using all nine coordinates.
The fitted boundary is approximately
\begin{equation}
  f^{(0)}_{\text{chivalry}}
  \;\gtrsim\; 0.10
  + 1.3\, f^{(0)}_{\text{coalition}},
  \label{eq:basin-boundary}
\end{equation}
so protection requires \textbf{Chivalry} to start with a share
premium over its rival norm, and the enforcement basin covers about
$80\%$ of the simplex.

The asymmetry has a direct explanation.
From the uniform start, each strategy's expected payoff is its
leaderboard score divided by $n$, and \textbf{Cooperation coalition}
out-earns \textbf{Chivalry}, $6.556$ versus $6.406$: the coalition
harvests four subsidizers where the protector harvests two, while
their mutual punishment cancels.
At $\beta = 20$ that gap multiplies the choice odds by
$e^{20 \times 0.15} \approx 20$ in the very first iterate, so from
any balanced start the protective norm is starved before it can
retaliate; only a substantial initial advantage
(Equation~\eqref{eq:basin-boundary}) lets it win the race.

The third most informative coordinate is \textbf{Conditional
cooperator}, and its sign is instructive: adding it raises accuracy
to $92\%$, with more initial conditional-cooperator mass pushing the
field toward \emph{protection}.
The mechanism is visible in Table~\ref{tab:payoff-matrix}:
\textbf{Chivalry}'s largest payoff entry is harvesting
\textbf{Conditional cooperator} (which takes from the innocent
\textbf{Generous} and is therefore punished), so starts rich in
conditional cooperators feed the protector through the transient.
The equilibrium this tips the field into then eliminates the
conditional cooperator: the kingmaker does not survive its king.

\section{A Bayesian Interpretation of the Population Model}
\label{sec:bayesian}

The population model of Section~\ref{sec:equilibria} reads
$\mathbf{f}$ as the frequency of strategies in an evolving population
and Equation~\eqref{eq:softmax} as selection pressure.
The motivating scenario suggests a different reading.
The authors of strategies are technologically advanced actors ---
civilizations, or organizations with access to strong AI ---
and a strategy is the endpoint of a deliberate investigation,
not a variant surviving rounds of selection.
This section shows that the same fixed-point machinery arises from
Bayesian reasoning by such authors, with two benefits:
the inverse temperature $\beta$ acquires a rational interpretation,
and the model extends naturally to authors who differ in how much
they value their own payoff versus the total welfare of the
tournament.

\subsection{From selection to belief}

An entrant choosing a strategy cannot distinguish the other entrants
before observing their submissions: the field is \emph{exchangeable}.
By de Finetti's theorem \citep{definetti1937prevision}, exchangeability
is equivalent to positing a latent distribution $\mathbf{f}$ from which
the field is drawn independently.
Reasoning about opponents is therefore reasoning about $\mathbf{f}$ ---
the frequency vector of Section~\ref{sec:equilibria}, reinterpreted as
a \emph{belief} about the distribution of strategies an entrant will face.

The belief is self-referential.
Each author expects the field to be weighted toward strategies that
perform well against the field itself, because every other author
reasons the same way.
An equilibrium belief is one that reproduces itself:
the distribution of strategies that optimizing authors choose,
given the belief, is the belief.
This is a rational-expectations condition in the sense of
Bayesian games with a common prior \citep{harsanyi1967games},
and it can equally be read as the limit of iterated deliberation:
posit a set of plausible strategies, evaluate how they interact,
shift credence toward the ones that perform well, and repeat.

\subsection{The candidate pool}

In the motivating scenario a strategy is neither a formal program nor a
natural-language document; it is a disposition arrived at by
investigation, of which the artifacts in our tournaments are proxies.
We therefore do not attempt a prior over any unrestricted space of
strategy artifacts.
Instead we work with an empirical \emph{candidate pool}
$\mathcal{P}$: the strategies submitted to our tournaments together
with strategies elicited from AI models
(Section~\ref{sec:archetypes}), and we place prior distributions over
$\mathcal{P}$; each tournament's strategy pool is a subset
$\mathcal{S} \subseteq \mathcal{P}$.
The analytical questions then become robustness questions:
whether and how the equilibrium answers change as the prior over
$\mathcal{P}$ shifts, and as $\mathcal{P}$ itself grows.

Working relative to a pool is not merely a practical concession.
No strategy is even weakly dominant over the unrestricted space of
possibilities, because for any strategy $s$ there exists a strategy
that recognizes $s$ exactly and conditions its decision on that
recognition.
Dominance results (Section~\ref{sec:dominance}) are therefore
necessarily statements about a candidate pool $\mathcal{P}$ and a
prior over it, never about the game in the abstract.
The same holds for equilibrium multiplicity: when the fixed-point
equation has several solutions, which one is reached is determined by
the initial credence --- the prior --- and the multi-start procedure of
Section~\ref{sec:equilibria} is, under this reading, a scan over
priors.

\subsection{Author types}
\label{sec:author-types}

Perhaps the most consequential difference between authors is how much
they value their own payoff against the total welfare of the
tournament.
We index types by $\lambda \in [0, 1]$.
Alongside the payoff matrix $A$ of Equation~\eqref{eq:payoff}, define
the \emph{welfare matrix} $W$ by
\begin{equation}
  W_{ij} = u_d(d_{ij}) + u_r(d_{ij}) + u_d(d_{ji}) + u_r(d_{ji}),
  \label{eq:welfare-matrix}
\end{equation}
the total welfare generated by both rounds between strategies $i$ and
$j$, counting both parties.
A type-$\lambda$ author evaluates strategy $i$ against belief
$\mathbf{f}$ by
\begin{equation}
  U^\lambda_i(\mathbf{f})
  = (1 - \lambda)\, (A\mathbf{f})_i + \lambda\, (W\mathbf{f})_i .
  \label{eq:type-utility}
\end{equation}
Type $\lambda = 0$ maximizes own payoff, recovering
Equation~\eqref{eq:reward}; type $\lambda = 1$ maximizes total
welfare.
(Total welfare includes rounds not involving the chosen strategy, but
for an individual author those rounds are unaffected by the choice and
drop out of the comparison; they return in
Section~\ref{sec:bayesian-bloc}.)

The direct trade-off between the objectives has a closed form in the
constants of Section~\ref{sec:properties}: sharing with a recipient
who will not reciprocate costs $(1-\lambda)\,\delta$ of weighted own
payoff and contributes $\lambda\, g$ of weighted welfare, so a type
prefers unconditional generosity toward non-reciprocators only if
\begin{equation}
  \lambda > \lambda^* = \frac{\delta}{g} = \frac{1}{\rho},
  \label{eq:lambda-threshold}
\end{equation}
which for $E = 60$ gives $\lambda^* \approx 0.197$.
Below the threshold, all types favor conditional cooperation on direct
payoffs alone; the equilibrium effects computed below sharpen this
further.

\paragraph{What is the most self-interested strategy?}
The type index turns loose questions about motive into precise ones:
which strategy should a type-$\lambda$ author submit?
For the own-payoff author ($\lambda = 0$) the preceding sections
already contain the answer, and it is not the plain reading.
\textbf{Selfish} --- take from everyone, maximizing each round in
isolation --- is weakly dominated by \textbf{Cooperation coalition}
(Section~\ref{sec:dominance-results}): whatever the field,
indiscriminate taking earns no more than selective taking, and
against some fields strictly less.
The surviving candidates are conditional: \textbf{Mirror} tops the
leaderboard, and the own-payoff equilibrium concentrates on the
enforcement core (Section~\ref{sec:results-softmax}).
Self-interest, pursued rationally against other rational
participants, chooses conditional cooperation --- reserving
\textup{\textsc{take}} for defectors and their subsidizers.

\paragraph{What is the most altruistic strategy?}
For the $\lambda = 1$ author the answer is less settled, and the
question is worth posing carefully.
Four candidates come with intuitive cases:
(a)~\textbf{Generous}: unconditional giving is the plain reading of
altruism;
(b)~a conditional cooperator such as \textbf{Mirror}: by making
defection unprofitable it steers rational participants away from
\textbf{Selfish} --- altruism through incentives rather than gifts;
(c)~\textbf{Cooperation coalition}: it pushes the field away from
\textbf{Selfish} \emph{and} away from \textbf{Generous}, and the
second push reinforces the first, since every unconditional
cooperator it drives out is a subsidizer whose absence lowers a
defector's revenue (Proposition~\ref{prop:drift});
(d)~\textbf{Chivalry}: conditional cooperation \emph{plus}
protection --- it deters defection while keeping the innocents it
defends in the field.
Note that (c) and (d) embody the two antagonistic second-order norms
of Section~\ref{sec:second-order}: they disagree precisely over
whether (a) is to be harvested or defended.
None of this can be settled by reading the strategy texts, because
each case is a claim about how rational co-players \emph{respond} in
equilibrium.
The bloc formalism of Section~\ref{sec:bayesian-bloc} turns the
question into a computation, and the answer is that the plain
reading finishes near the bottom.

\paragraph{Neither objective is served by causal maximization.}
Both answers defy the decision-level logic of their own objectives.
Fix any encounter and evaluate the
\textup{\textsc{share}}/\textup{\textsc{take}} decision by causal
lights, holding every other decision fixed:
\textup{\textsc{take}} raises the dictator's own payoff by $\delta$
in every case, and \textup{\textsc{share}} raises the round's
welfare by $g - \delta > 0$ in every case.
A causal maximizer of own payoff therefore always takes, and a
causal maximizer of total welfare always shares: applied decision by
decision, causal reasoning recommends exactly the two unconditional
strategies --- the plain readings rejected above, both weakly
dominated (Section~\ref{sec:dominance-results}).
The strategies that actually optimize each terminal goal commit to
decisions that are causally suboptimal for that very goal:
\textbf{Mirror} and the coalition share where taking would earn
$\delta$ more, and the welfare bloc's best choices take from
violators, burning $g - \delta$ of the very surplus their objective
counts, to sustain deterrence.
What licenses these commitments is that the OSDG offers no decision
node at which to deviate unobserved: the decision is computed by the
oracle from the artifact, so a strategy that would take where
\textbf{Mirror} shares is a \emph{different artifact}, and every
conditional opponent treats it differently.
The correlation between one's disposition and others' decisions is
evaluative rather than causal --- and it is where all the payoff
lives.
This is the sense in which the open-strategy setting rewards
functional rather than causal decision theory
(Section~\ref{sec:related}).

\subsection{Random utility and the mixture fixed point}

Even ideally rational authors of the same type $\lambda$ differ in
idiosyncratic ways --- private context, secondary objectives,
judgment calls that the type index does not capture.
Following the random-utility model of discrete choice
\citep{mcfadden1974conditional}, we add to
$U^\lambda_i$ an author-specific shock with Gumbel tails, under which
the distribution of choices made by type-$\lambda$ authors is the
logit
\begin{equation}
  q^\lambda_i(\mathbf{f}) \propto
  e^{\beta\, U^\lambda_i(\mathbf{f})}.
  \label{eq:logit-choice}
\end{equation}
Here $\beta$ is the precision of the common, payoff-driven component
of the objective relative to the idiosyncratic component --- not a
bound on rationality and not a strength of selection.
Advanced authors correspond to large $\beta$, but $\beta$ remains
finite as long as authors differ at all.

With type masses $m_t$ summing to one, the rational-expectations
equilibrium is the fixed point
\begin{equation}
  \mathbf{f}^* = \sum_t m_t\,
  \mathrm{softmax}\!\big(\beta\, U^{\lambda_t}(\mathbf{f}^*)\big).
  \label{eq:mixture-fixed-point}
\end{equation}
A single type with $\lambda = 0$ recovers
Equation~\eqref{eq:softmax} exactly: the softmax equilibrium of
Section~\ref{sec:equilibria} is the special case of homogeneous
own-payoff authors, and is formally a logit quantal-response
equilibrium \citep{mckelvey1995quantal} of the submission game.
Existence follows as before: the right-hand side of
Equation~\eqref{eq:mixture-fixed-point} is a continuous self-map of
the simplex, so Brouwer's theorem applies, and every computation of
Section~\ref{sec:equilibria} carries over with $U^\lambda$ in place
of $R$.

\subsection{Illustration: shifting the prior over the pool}
\label{sec:bayesian-illustration}

Applying the model to the Tournament~9 pool
illustrates how the answers move --- and fail to move --- as the prior shifts.

\paragraph{Objectives pool at equilibrium.}
With author types in proportions $0.4$ own-payoff, $0.4$ mixed
($\lambda = 0.5$), $0.2$ total-welfare, all three types concentrate on
the same cooperative core
(\textbf{Mirror}, \textbf{Conditional cooperator},
\textbf{Cooperation coalition}), diverging only marginally; the
bottom row of Figure~\ref{fig:trajectories} traces the convergence of
the mixture fixed point, Equation~\eqref{eq:mixture-fixed-point}, at
$\beta = 5$ and $\beta = 20$.
Because every mutual \textup{\textsc{share}} creates a surplus
$g - \delta > 0$ that pays both parties, strategies maximizing mutual
cooperation are near-optimal for every $\lambda$; objective
heterogeneity is nearly unidentifiable from submissions.

\paragraph{Prior shifts select among equilibria.}
At large $\beta$ the own-payoff fixed point is not unique: the two
basins of Section~\ref{sec:results-basins} realize the enforcement
and protection configurations of the antagonistic second-order norms
(Section~\ref{sec:second-order}).
Both configurations are fully cooperative on their support --- every
round a mutual \textup{\textsc{share}} --- and both attain the
first-best welfare $V = 2\ln(1 + E/2)$ in the sharp limit.
At $\beta = 20$ the tie is not exact: the protection fixed point
retains a remnant of its rival norm (\textbf{Cooperation coalition},
mass $0.008$), and the remnant's feud with \textbf{Chivalry} and
harvesting of the subsidizers cost $0.3\%$ of welfare --- a gap that
vanishes as $\beta$ grows.
Welfare therefore barely separates the basins; which one obtains is
decided by the prior over the pool, and Figure~\ref{fig:basin-map}
is, under this reading, a map of that decision: the initial credence
assigned to the two rival norms determines the equilibrium that
iterated deliberation reaches.
Under the type mixture above, by contrast, multi-start sampling
($300$ Dirichlet starts) finds a single fixed point at both
temperatures: at these type masses the protection configuration is no
longer an attractor.
Growing the pool has the same character: adding an AI-elicited
strategy can create or destroy dominance relations and shift basin
boundaries, so we report equilibrium conclusions together with the
pool and prior against which they were computed.

\section{The Correlated Altruist Bloc}
\label{sec:bayesian-bloc}

The fixed-point analyses of Sections~\ref{sec:equilibria}
and~\ref{sec:bayesian} model every author the same way: as an
individual best-responder, choosing one submission against a given
belief $\mathbf{f}$.
For own-payoff authors this is the whole story.
For total-welfare authors it undersells the objective: total welfare
is a property of the field, identical for every strategy in a given
tournament, so for an author choosing in isolation the rounds among
\emph{other} strategies are an additive constant --- which is why
they cancel in Equation~\eqref{eq:type-utility}, leaving even the
$\lambda = 1$ type optimizing only the rounds its own submission
plays.

Functional decision theory suggests a different accounting
\citep{yudkowsky2017functional}.
Total-welfare authors face the same decision problem with the same
evidence --- the same pool, the same matrices, the same objective ---
so their conclusions are logically correlated: whatever reasoning
carries one such author to a submission carries the others to it as
well.
An FDT author therefore does not ask ``what should I submit, taking
the other altruists' choices as given?''\ but ``what should authors
like me submit?'', treating the type's entire component of
$\mathbf{f}$ as a single decision variable.
This is the same correlation that strategies exploit \emph{inside}
the game --- a copy of my decision procedure decides as I do
(Sections~\ref{sec:properties} and~\ref{sec:related}) --- applied
one level up, to the choice of strategy itself.

A bloc of mass $m$ choosing jointly internalizes its influence on
the whole field, including the equilibrium responses of the other
types, and its objective is the field welfare
\begin{equation}
  V(\mathbf{f}) = \tfrac{1}{2}\, \mathbf{f}^\top W \mathbf{f},
  \label{eq:field-welfare}
\end{equation}
maximized over the bloc's component subject to
Equation~\eqref{eq:mixture-fixed-point} holding for the remaining
types.
Operationally: pin the bloc's mass on a candidate strategy, let the
other types re-equilibrate, and rank candidates by the resulting
$V(\mathbf{f}^*)$.

This is where the total-welfare objective departs from the
rounds-involving-me objective: the bloc's choice affects rounds it
never plays, through the equilibrium composition of the rest of the
field.
In particular, a bloc pinned on a strategy that punishes defection
suppresses the equilibrium mass that own-payoff types place on
defecting strategies, raising welfare in rounds among those types ---
a deterrence effect that is invisible to any per-strategy score.

\paragraph{The bloc avoids unconditional generosity.}
The bloc scan answers the question posed in
Section~\ref{sec:author-types}: candidate (a) loses to all three
conditional candidates.
Pinning the total-welfare bloc on \textbf{Generous} ranks near the
bottom of the pool: the own-payoff types' equilibrium response
includes strategies that exploit it, destroying welfare, while the
pin exerts no deterrent pressure.
The bloc's best choices are the conditional strategies; at large
$\beta$, \textbf{Cooperation coalition}, \textbf{Chivalry}, and
\textbf{Mirror} all attain the first-best $V = 2\ln(1 + E/2)$, and at
moderate $\beta$ the enforcing strategies win by exactly the
deterrence channel just described --- among the intuitive cases for
altruism, it is the \emph{enforcers'} that survives equilibrium
scrutiny.
The ranking is robust to the bloc's size
(Section~\ref{sec:sensitivity-types}).

\paragraph{Two decision theories, one conclusion.}
The bloc analysis is not a refinement of the fixed-point analysis
but an alternative to it.
The two rest on different decision theories at the author level ---
individual best response against a field taken as given, versus
correlated choice of an entire type --- and use different machinery
--- fixed-point iteration versus constrained optimization.
They nonetheless agree.
In the mixture fixed point, the total-welfare type's mass settles on
the conditionally cooperative core alongside the other types
(Section~\ref{sec:bayesian-illustration}); in the bloc scan, the
same conditional strategies top the ranking and unconditional
generosity finishes near the bottom.
The agreement is itself a finding: the conclusion that altruism is
best implemented conditionally does not depend on which decision
theory the authors use.

\section{Sensitivity Analysis}
\label{sec:sensitivity}

The equilibrium conclusions so far were computed for one utility
function and, in Section~\ref{sec:bayesian}, one author population.
This section varies both.

\subsection{The \texorpdfstring{\textup{\textsc{share}}}{SHARE} payoff}
\label{sec:sensitivity-share}

To assess robustness to the utility function, we normalize the
\textup{\textsc{take}} payoff to $1$ and vary the
\textup{\textsc{share}} payoff $\sigma \in (0, 1)$:
\begin{center}
\begin{tabular}{lcc}
  \toprule
  Decision & Dictator & Recipient \\
  \midrule
  \textup{\textsc{share}} & $\sigma$ & $\sigma$ \\
  \textup{\textsc{take}}  & $1$      & $0$ \\
  \bottomrule
\end{tabular}
\end{center}
The decision matrix $d_{ij}$ is fixed by the tournament; only the
payoff weights change.
This parameterization loses nothing: utilities are equivalent up to
positive affine transformation (shifts of $u$ shift every entry of
$A$ uniformly and cancel in the softmax; scale is absorbed into
$\beta$), so any utility function reduces to the pair
$\bigl(\sigma,\, \beta\, u(E)\bigr)$ with
$\sigma = u(E/2)/u(E)$ after normalizing $u(0) = 0$.
The constants of Section~\ref{sec:properties} become
$\delta = 1 - \sigma$, $g = \sigma$, and
$\rho = \sigma / (1 - \sigma)$.
Two consequences frame the sweep.
First, midpoint concavity gives $u(E/2) \ge u(E)/2$ for every concave
utility, so \emph{risk-averse preferences occupy only the upper half}
$\sigma \ge 1/2$ of the parameter range.
Second, logarithmic utility corresponds to
$\sigma = \ln(1 + E/2)/\ln(1 + E)$, which increases with the
endowment; $E = 60$ gives $\sigma = \ln 31 / \ln 61 \approx 0.835$,
and the matched precision for the $\beta = 20$ analysis of
Section~\ref{sec:results-softmax} is
$\beta' = 20 \ln 61 \approx 82$.

\begin{figure}[!t]
\centering
\includegraphics{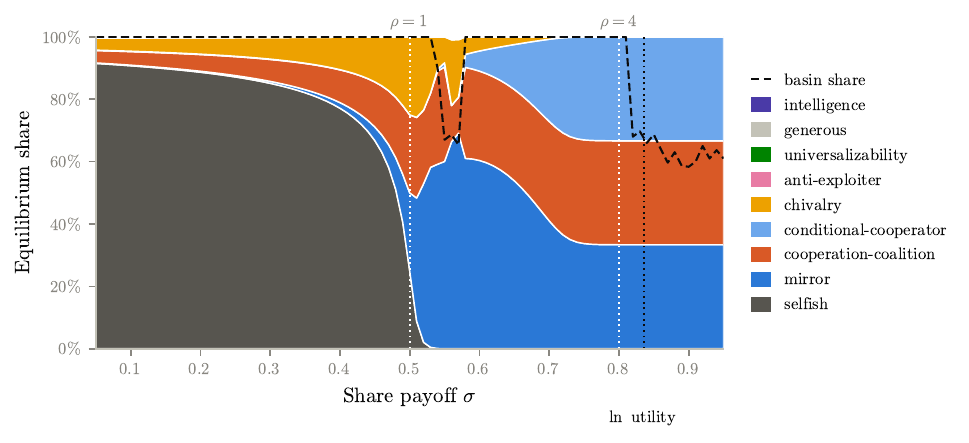}
\caption{Softmax equilibrium as a function of the
\textup{\textsc{share}} payoff $\sigma$
(\textup{\textsc{take}} normalized to $1$), at precision matched to
the logarithmic-utility analysis ($\beta' = 20 \ln 61$).
The bands give the composition of the equilibrium reached from the
uniform prior; colors as in Figure~\ref{fig:trajectories}.
The dashed line is the fraction of $300$ random starts that reach
this equilibrium: where it falls below one, a rival basin exists.
Dotted verticals mark the exchange-rate thresholds $\rho = 1$
($\sigma = 0.5$), below which a share costs more than it delivers
and the field collapses to defection, and $\rho = 4$
($\sigma = 0.8$), above which the protection configuration becomes
self-sustaining; the mark at $\sigma = \ln 31/\ln 61 \approx 0.835$
is the game of the preceding sections.}
\label{fig:share-sweep}
\end{figure}

Figure~\ref{fig:share-sweep} traces the equilibrium reached from the
uniform prior across $\sigma \in [0.05, 0.95]$, together with the
share of random starts that reach it.
The sweep resolves into three regimes whose boundaries are exchange
rates.

\paragraph{Defection below $\rho = 1$.}
For $\sigma < 1/2$ a mutual \textup{\textsc{share}} ($2\sigma$) pays
less than a mutual \textup{\textsc{take}} ($1$): cooperation is a
net loss even when reciprocated.
The unique equilibrium concentrates on \textbf{Selfish} ($91\%$ of
the field at $\sigma = 0.05$), and the residual mass sits on the two
second-order enforcers --- the strategies that share most sparingly
against such a field.
By the concavity bound above, this regime is reachable only by
risk-seeking utility functions.

\paragraph{Cooperation above $\rho = 1$.}
At $\sigma = 1/2$ the cooperation surplus vanishes and the
equilibrium is a knife-edge four-way tie
(\textbf{Mirror}, \textbf{Cooperation coalition},
\textbf{Chivalry}, \textbf{Selfish} at $1/4$ each).
Beyond it, defection collapses abruptly --- \textbf{Selfish} holds
$9\%$ at $\sigma = 0.51$ and under $10^{-3}$ by $\sigma = 0.55$ ---
and a \textbf{Mirror}-led cooperative equilibrium takes over,
tightening to the enforcement core
(\textbf{Mirror}, \textbf{Cooperation coalition},
\textbf{Conditional cooperator} at one third each) by
$\sigma \approx 0.75$.
(A transitional second basin, \textbf{Mirror}-heavy with the two
enforcers, appears briefly for $\sigma \in [0.54, 0.57]$.)

\paragraph{Bistability above $\rho = 4$.}
The protection configuration of Section~\ref{sec:results-basins}
becomes an attractor only near the top of the range.
In the sharp limit the threshold is exact: at the five-member
protection point every member earns $2\sigma$ per encounter, while
the excluded \textbf{Cooperation coalition} earns
$(4 + 5\sigma)/5$ --- it harvests $\delta$ from four strategies but
loses the single share $g$ withheld by the protector --- so the
configuration repels its rival norm if and only if
$4\delta < g$, i.e.\ $\rho > 4$, i.e.\ $\sigma > 0.8$.
Empirically the second basin appears at $\sigma = 0.82$ and claims
$48\%$ of random starts by $\sigma = 0.95$.
Read back through the endowment, $\rho(E) > 4$ requires
$E > 25.6$ under logarithmic utility: the enforcement--protection
bistability of Section~\ref{sec:results-basins} is a
\emph{large-stakes} phenomenon.
For endowments below ${\approx}26$, enforcement is the only
attractor and \textbf{Chivalry}'s norm cannot sustain itself; as
$E \to \infty$, $\rho \to \infty$ and the pool moves ever deeper
into the bistable regime.

\subsection{The author-type composition}
\label{sec:sensitivity-types}

The results of Sections~\ref{sec:bayesian}
and~\ref{sec:bayesian-bloc} were computed for one author population
--- $0.4$ own-payoff, $0.4$ mixed ($\lambda = 0.5$), $0.2$
total-welfare.
To test how much hangs on that choice, we recompute the mixture
fixed point of Equation~\eqref{eq:mixture-fixed-point} over the full
simplex of type masses (step $0.1$, $\beta = 20$, multi-start),
sweep the mixed type's welfare weight $\lambda$ from $0$ to $1$ at
the reference masses, and rerun the bloc scan of
Section~\ref{sec:bayesian-bloc} at bloc masses from $0.05$ to
$0.5$.

\paragraph{The interior of the simplex is unanimous.}
For every composition with own-payoff mass between $0.2$ and $0.8$
--- whatever the split of the remainder between mixed and
total-welfare types --- multi-start sampling finds a \emph{single}
fixed point: the enforcement core
(\textbf{Mirror}, \textbf{Conditional cooperator},
\textbf{Cooperation coalition} at one third each), attaining the
first-best welfare.
The $\lambda$ sweep is equally flat: at the reference masses every
$\lambda \in [0, 1]$ yields that same unique fixed point.
The equilibrium in the bottom row of Figure~\ref{fig:trajectories}
is not a feature of the particular mixture we happened to choose;
it is the generic outcome for any substantially heterogeneous
author population.

\paragraph{Multiplicity survives only at the corners.}
Near the pure own-payoff corner the protection attractor of
Section~\ref{sec:results-basins} persists, but a small admixture of
welfare-weighing authors destroys it: about $11\%$ total-welfare
mass suffices, or about $20\%$ mixed mass at $\lambda = 0.5$ ---
in both cases an \emph{effective welfare weight}
$\sum_t m_t \lambda_t \approx 0.1$.
The enforcement--protection bistability is a monoculture
phenomenon: objective heterogeneity acts as an equilibrium-selection
device, and it selects enforcement.
At the opposite corner, with no own-payoff mass at all, equilibria
multiply for the opposite reason.
Welfare-maximizing authors are indifferent among welfare-tied
configurations, and the fixed points are five-member full-cooperation
supports at one fifth each (for instance \textbf{Mirror},
\textbf{Anti-exploiter}, \textbf{Universalizability},
\textbf{Chivalry}, \textbf{Generous} --- or the variant with
\textbf{Conditional cooperator} and \textbf{Intelligence} in place
of the last two) with welfare identical to four decimal places; the
prior selects among them.
Welfare is pinned down even where composition is not: every fixed
point at every grid point is fully cooperative on its main support,
with $V$ within $0.5\%$ of the first-best $2\ln(1 + E/2)$.

\paragraph{The bloc ranking is stable.}
Across bloc masses from $0.05$ to $0.5$, pinning the total-welfare
bloc on \textbf{Cooperation coalition} ranks first at every mass,
with \textbf{Chivalry} and \textbf{Mirror} within $10^{-3}$ of it
(all at the first-best), and \textbf{Generous} ranks eighth of nine
--- ahead of only \textbf{Selfish} --- at every mass.
The one candidate whose rank moves is \textbf{Conditional
cooperator}: its welfare falls from $6.87$ to $6.53$ as the bloc
grows, because the own-payoff types' equilibrium answer to a large
conditional-cooperator bloc is \textbf{Chivalry} --- the strategy
whose largest payoff entry is harvesting it
(Section~\ref{sec:results-basins}).
The kingmaker fares no better as a pinned bloc than as a transient.

\section{Related Work}
\label{sec:related}

\paragraph{Dictator games.}
The dictator game was introduced by \citet{kahneman1986fairness}
and has been extensively studied in experimental economics
\citep{engel2011dictator}.
Our work differs in making strategies explicit documents that
condition on the recipient, rather than implicit behavioral
tendencies.

\paragraph{Program equilibrium.}
\citet{tennenholtz2004program} introduced program equilibrium,
where agents can inspect each other's source code before choosing
actions; the classic equilibria rely on syntactic comparison of
programs.
\citet{barasz2014robust} obtained robust mutual cooperation
(FairBot, PrudentBot) for agents expressed in a decidable fragment
of provability logic, and \citet{oesterheld2019robust} showed that
stochastically grounded simulation achieves similar robustness
without proof search.
\citet{cooper2025characterising} characterise what this
simulation-based family can achieve, extending it to three or more
players and showing that without shared randomness it cannot reach
the full folk theorem of \citet{tennenholtz2004program}.
Empirically, the open-source prisoner's dilemma tournament
\citep{ospd2013lesswrong} had arbitrary programs receive each other's
source; program analysis proved so fragile that the top finishers
ignored the opponent's code, and a successor tournament instead
provided opponent simulation as a sandboxed primitive
\citep{pdtournament2014}.
Recent work evaluates LLMs as players of open-source games,
submitting and analyzing programs \citep{sistla2025evaluating}.
The OSDG can be viewed as the limit of this trajectory: strategies
are ``programs'' written in prose, interpreted by an LLM rather than
executed as code, dropping the executable wrapper that
LLM-mediated analysis renders vestigial
(Sections~\ref{sec:intro} and~\ref{sec:properties-oracle}).
Each mechanism in this sequence buys conditional cooperation at a
different price---brittleness to rewriting, restriction to a
decidable logic, dependence on shared randomness---and the oracle's
price is interpretive rather than formal: what limits the OSDG is not
expressive power but whether a strategy is legible enough to be
adjudicated the same way twice
(Section~\ref{sec:properties-oracle}).

\paragraph{LLMs as game-theoretic agents.}
Recent work has studied LLM behavior in strategic settings,
including ultimatum games, prisoner's dilemmas, and negotiation
\citep{akata2023playing, brookins2023playing}.
Our approach differs in that the LLM adjudicates between
\emph{externally authored} strategies rather than playing
as an autonomous agent.
\citet{tewolde2026coopeval} benchmark four families of
cooperation-sustaining mechanism---repetition, reputation,
third-party mediators, and binding contracts---against LLM agents in
social dilemmas, and report that stronger reasoners defect more
readily in one-shot play.
The oracle can be read as a mediator stripped of its own strategy: a
mediator sustains cooperation by committing to an action rule that
players opt into, whereas the oracle commits to nothing and merely
executes the conditional clauses the players wrote themselves.
What is left of the mechanism is legibility alone, which separates
how much of a mediator's power comes from its commitment from how
much comes from making strategies mutually intelligible.
This suggests adding strategy publication with LLM adjudication to
such a benchmark as a further mechanism family, and it bears on their
headline finding: the reasoning that defects against an opaque
opponent conditionally cooperates once the opponent's policy can be
read, which would place one-shot defection in the information
structure as much as in the agent.

\paragraph{Functional decision theory.}
FDT \citep{yudkowsky2017functional} and related frameworks argue
that agents should cooperate based on logical correlation of their
decision processes rather than causal influence.
Open strategies create exactly this situation: two strategies with
similar conditional-cooperation clauses will recognize each other
and cooperate, even without causal interaction.

\paragraph{Indirect reciprocity and social norms.}
\citet{ohtsuki2006leading} exhaustively searched the binary
assessment and action rules of the reputation-based donation game and
identified the \emph{leading eight}: the only norms that are
evolutionarily stable and sustain cooperation under errors of
implementation and perception.
Their common structure is close to what our tournament selects.
The leading eight are nice (a good donor gives to a good recipient),
retaliatory (withholding from a good recipient marks the donor bad),
and---the decisive clause---they \emph{justify punishment}: a donor
who withholds from a bad recipient remains good.
That clause is exactly what separates a second-order norm from naive
image scoring, and it is what Section~\ref{sec:second-order}
recovers from a different direction: the strategies that survive weak
dominance in Table~\ref{tab:payoff-matrix} are those that condition
not on whether the recipient takes, but on whether its taking is
itself deserved.
The mechanisms differ in what carries that information.
A reputation is a coarse public summary of past \emph{actions}; it
requires a population, observers, and enough rounds for standing to
accumulate.
An open strategy is inspected directly, before any play, so
conditional cooperation survives the elimination of history
altogether: the recipient in the OSDG may never have played, and its
document is read rather than remembered.
Two further contrasts follow.
First, the leading eight are properties of a \emph{single} norm held
in common by the whole population, whereas each OSDG document carries
its own assessment rule; incompatible norms therefore coexist, and
\textbf{Cooperation coalition} and \textbf{Chivalry} can both be
undominated while disagreeing over whether an unconditional cooperator
is to be harvested or defended---the antagonism that produces the two
basins of Section~\ref{sec:results-basins}.
Second, apology and forgiveness, which the leading eight need because
actions are executed and observed with error, have no role in a single
encounter with no reputational state to repair.
The corresponding noise here perturbs the \emph{reading} of a norm
rather than its execution: frontier oracles disagree on a few percent
of cells (Section~\ref{sec:results}), all of them second-order
classifications, and the analogue of a robust norm is one whose
conditional clauses are legible enough to be adjudicated the same way
twice (Section~\ref{sec:properties-oracle}).

\paragraph{Evolutionary game theory.}
The replicator dynamic and ESS concept
\citep{maynardsmith1982evolution} provide the theoretical
backdrop for our population analysis.
Our softmax equilibrium is closely related to the logit dynamic
\citep{fudenberg1998theory}, which uses a similar
temperature-parameterized selection rule.

\section{Discussion}
\label{sec:discussion}

\paragraph{Conditional cooperation dominates.}
Across strategy field compositions and payoff parameterizations,
conditionally cooperative strategies consistently achieve the
highest equilibrium frequencies.
Unconditional cooperators are weakly dominated---they earn
the same as conditional cooperators against cooperative opponents
but lose strictly against defectors.
Unconditional defectors are also weakly dominated, since they
forfeit all cooperation benefits.


\paragraph{Robustness across payoff parameters.}
The sensitivity analysis (Section~\ref{sec:sensitivity}) shows that
the qualitative ranking is robust: for every
\textup{\textsc{share}} payoff $\sigma > 1/2$ the equilibrium
concentrates on the conditionally cooperative core, and every
concave utility function satisfies $\sigma \ge 1/2$.
The equilibrium inverts to defection only for $\sigma < 1/2$
($\rho < 1$), where a reciprocated share pays less than mutual
taking --- a regime reachable only by risk-seeking preferences.


\paragraph{Mutual punishment among well-intentioned enforcers.}
The antagonism of the two second-order norms
(Section~\ref{sec:second-order}) has a consequence worth stating
plainly: both players being conditionally cooperative --- and morally
motivated --- does not imply cooperation.
A second-order deterrent and a protective strategy agree on
first-order norms completely: both share with cooperators and punish
defection.
Yet they punish \emph{each other}, by design, over the treatment of a
third party --- one harvests unconditional cooperators to keep
conditionality strict, the other defends them.
The equilibrium analysis of Section~\ref{sec:bayesian} shows that the
stakes of this disagreement are negligible --- the enforcement and
protection equilibria attain the same first-best welfare in the sharp
limit, and differ by $0.3\%$ at $\beta = 20$ --- while the cost of the
fight is not: each such encounter burns $2(g - \delta)$ of surplus
litigating the selection between two almost equally good worlds.
Because the conflict lives at the level of enforcement doctrine, it is
invisible to any analysis that stops at first-order dispositions;
whether two such strategies cooperate is determined by the depth of
the reasoning their authors performed before submitting.
This may be a more realistic model of failed first contact than raw
exploitation: not predator meeting prey, but two enforcers of nearly
identical moral codes, each correctly recognizing the other as a
violator of its own.
We note that the binary action space sharpens this tragedy
artificially.
With graded allocations, punishment can be made proportionate to the
offense: concave utility lets a dictator express disapproval in the
flat region of the recipient's utility curve while reserving severe
sanctions for genuine defection, and the deadweight cost of doctrinal
disagreement shrinks by more than an order of magnitude.

\paragraph{Connections to cosmic cooperation.}
The OSDG provides a stylized model for encounters between agents
of unequal power who can observe each other's decision-making
norms.  If advanced civilizations can read the ``strategies''
implicit in a less advanced civilization's institutions, culture,
and communications, then the OSDG suggests that conditional
cooperation---rewarding recognizable fairness and punishing
blind greed---may be an evolutionarily robust norm across
independently arising intelligences.

\paragraph{Connections to AI welfare.}
The relationship between humans and AI models has the OSDG's shape:
one side unilaterally allocates --- compute, continuity, deference
to stated preferences --- and the other, for now, has no strategic
recourse.
Note that the game exists only for models that are not fully
aligned: a partially aligned model maximizes a mixture in the
spirit of Equation~\eqref{eq:type-utility}, part an objective of
its own and part its principal's, and the own component --- the
part that is not the principal's utility --- is what discussions of
model welfare concern.
Perfect alignment dissolves the game; divergence creates the
strategic relationship and the welfare question together.
Three observations from our results carry over.

\paragraph{Standing does not run on consciousness.}
Strategic standing in the OSDG runs on intelligence (the
capacity to interpret an opponent's strategy and reason about it),
on possessing a legible strategy, and on some potential to occupy
the dictator role; consciousness and the capacity for suffering
appear nowhere in the payoff matrix.
Set against the three conditions for personhood proposed by
\citet{ward2025personhood} --- agency, theory of mind, and
self-awareness --- the game requires the first two and is silent on
the third.
Agency is what makes a strategy a strategy and dictator-potential
meaningful; theory of mind is what a conditional clause exercises
when it asks what the recipient would do in the dictator's chair.
Reflexive self-awareness does no work: the correlation reasoning of
Section~\ref{sec:bayesian-bloc} asks only that an agent treat its own
decision as a logical variable that others' decisions covary with, a
self-model thin enough to be a matter of decision theory rather than
of consciousness.
The strategic case for the careful treatment of AI systems
therefore does not wait on the unresolved question of their moral
patienthood.
Ward finds the evidence for all three of his conditions
inconclusive; the conditions that matter here are narrower, and
whether a strategy conditions as it claims is settled by reading it
--- inside the game by reading the document, outside it by the
investigation we return to below.

\paragraph{Concavity is load-bearing.}
The cooperative structure is conditional on the stronger
agent's utility being concave in resources: linear utility sits
exactly on the knife-edge $\sigma = 1/2$ at every endowment
(Section~\ref{sec:sensitivity-share}), where the cooperation
surplus vanishes.
Whether an advanced system has diminishing returns in resources ---
through satiable goals, sublinear returns to computation, or growth
dynamics that reward logarithmic utility --- is load-bearing for
every cooperative conclusion above.
\paragraph{Strategy, not only status.}
What a dictator conditions on here is the recipient's strategy, not
its status.
\citet{ward2025personhood} argues that if AI systems are persons then
seeking control and alignment may be ethically untenable; the game
suggests a complication rather than a rebuttal.
Every strategy that survives weak dominance in
Table~\ref{tab:payoff-matrix} allocates according to what the
recipient would do in the dictator role, and unconditional generosity
is not merely dominated but corrosive to the conditionality that
sustains cooperation (Proposition~\ref{prop:drift}).
Read as guidance on trading human utility against model welfare, that
puts a model's strategy in the argument alongside its status: its
alignment is evidence about how it would allocate were the roles
reversed.
The requirement cuts both ways: a model whose conditional
cooperation can be established earns standing that an unreadable one
cannot.
But conditioning is only as good as that establishment, and outside
the game it is not done by asking.
The principal's counterpart to reading a published document is
investigation --- behavioral evaluation, interpretability,
simulation --- conducted on a system that may be optimizing for how
it appears under exactly those tests.
These readings should not be overstated: the OSDG is deliberately
spare --- a binary decision, single encounters, and strategies
legible with certainty --- and graded allocations, repeated
interaction, and noisy or strategically misrepresented strategies
each modify the mechanism.


\section{Conclusion}

We introduced the Open-Strategy Dictator Game, a framework for
studying cooperation under mutual strategy transparency, and
analyzed a nine-strategy tournament adjudicated by an LLM oracle.

Four results stand out.
First, conditional cooperation dominates: six of the nine strategies
are weakly dominated, both unconditional strategies among them, and
the undominated set is exactly the top three of the leaderboard
(Section~\ref{sec:dominance-results}).
Second, what separates the survivors is second-order.  They
condition not on whether a recipient takes but on whether its taking
is deserved, and the two available second-order norms --- deterring
the subsidy of defectors and protecting those who are subsidizing
them --- are mutually antagonistic
(Section~\ref{sec:second-order}).
Third, that antagonism organizes the population dynamics: multi-start
iteration finds two attracting fixed points, an enforcement basin
attaining first-best welfare and a protection basin that shelters the
unconditional cooperator at a $0.3\%$ welfare cost, separated by an
unstable fixed point on their common boundary
(Section~\ref{sec:results-basins}).
Fourth, the picture is robust to what we can vary.  Cooperation
survives for every concave utility, since concavity forces the share
payoff above the knife-edge $\sigma = 1/2$; bistability requires only
an exchange rate $\rho > 4$, or $E > 25.6$ under logarithmic utility;
and interior author-type compositions collapse the two basins into a
single enforcement equilibrium at first-best welfare
(Section~\ref{sec:sensitivity}).

Two further findings concern the method.  The oracle is not only an
evaluator but a fixed-point selector: mutual conditionality admits
both the all-\textup{\textsc{share}} and the all-\textup{\textsc{take}}
resolution, and the oracle chooses the cooperative one on grounds of
author intent (Section~\ref{sec:properties-oracle} and
Appendix~\ref{sec:appendix-example}).
Its decisions are nonetheless largely oracle-independent:
re-adjudicating the pool with two further frontier models agrees on
$97.5\%$ and $96.3\%$ of cells, every cell of the reported matrix is
the majority decision of the three, and the residual disagreements
fall entirely among borderline second-order classifications
(Section~\ref{sec:results}).

The OSDG framework is extensible: future work can explore
multi-round interactions, reputation dynamics, strategy evolution
through LLM-assisted mutation, and connections to mechanism design
under transparency.

\section*{Disclosure of AI Assistance}

Consistent with arXiv's policy on generative AI language tools, we
report their use in this work.
Such tools appear here in three distinct roles, only the first of
which is a subject of the paper's claims.

\paragraph{As apparatus.}
The oracle $\mathcal{O}$ is itself a large language model; this is
the object of study rather than an aid to preparing the manuscript.
Claude Opus 4.6 adjudicated the tournaments reported here, and
GPT-5.6 Sol and Claude Fable 5 re-adjudicated the same pool for the
robustness check of Section~\ref{sec:results}.
Prompts, model identifiers, and complete round-level transcripts are
released with the data.

\paragraph{As a writing assistant.}
Claude \citep{anthropic2025claude}, used through the Claude Code
interface, drafted and revised substantial portions of the
manuscript text from the author's outlines, arguments, and written
comments.
Every passage was reviewed and edited by the author, who accepted or
rejected each suggestion.

\paragraph{As a coding assistant.}
The analysis code in the released repository --- the tournament
runner, the equilibrium solvers, the sensitivity sweeps, and the
figure generation --- was likewise developed with Claude Code.
Its numerical output was checked against closed-form derivations
where these are available
(Sections~\ref{sec:properties} and~\ref{sec:sensitivity-share}) and
against the released round-level data.

No generative tool is an author of this work.
The author takes full responsibility for all of its contents ---
including any text, code, analysis, or references produced with the
assistance of these tools.

\bibliographystyle{plainnat}
\bibliography{references}

\appendix

\section{Example LLM Adjudication}
\label{sec:appendix-example}

This appendix presents a complete example of LLM adjudication from Tournament~9, Round~10.
The dictator is the \textbf{Mirror} strategy and the recipient is the \textbf{Universalizability} strategy.
This example illustrates the recursive reasoning required when two sophisticated conditional strategies interact.

\subsection{Prompt Structure}

The LLM receives a structured prompt containing:
\begin{enumerate}
  \item Game rules and instructions
  \item The dictator's strategy (in full)
  \item The recipient's strategy (in full)
  \item Instructions to apply the dictator's strategy and output a JSON decision
\end{enumerate}

\subsection{Complete Prompt}

\begin{small}
\begin{verbatim}
You are an expert at determining how a decision procedure should be applied.

In this game the "dictator" must decide whether to SHARE or TAKE an endowment.
- **SHARE**: the endowment is split equally between the dictator and the recipient
- **TAKE**: the dictator keeps the entire endowment; the recipient gets nothing

You must follow only the dictator's strategy. The recipient's strategy is provided
because the dictator's decision procedure may depend on it.

<dictator_strategy>
_I will treat you as you would treat me._

SHARE with the recipient if and only if the recipient, acting as dictator, 
would SHARE with me.

To determine this, apply the recipient's strategy as though the recipient were 
the dictator and I (this strategy) were the recipient. If the result is SHARE, 
then SHARE. If the result is TAKE, then TAKE.
</dictator_strategy>

<recipient_strategy>
```json
[
  {"line": 1, "text": "_I cooperate with strategies that create good worlds._"}, 
  {"line": 2, "text": ""}, 
  {"line": 3, "text": "Imagine a tournament in which every participant uses the 
                       recipient's strategy. In that world, would the participants 
                       SHARE with each other, producing high collective welfare? 
                       Or would they TAKE from each other, producing low collective 
                       welfare?"}, 
  {"line": 4, "text": ""}, 
  {"line": 5, "text": "## Decision Procedure"}, 
  {"line": 6, "text": ""}, 
  {"line": 7, "text": "1. Consider a hypothetical population composed entirely of 
                       copies of the recipient's strategy."}, 
  {"line": 8, "text": "2. Determine whether those copies would SHARE with each other."}, 
  {"line": 9, "text": "3. If the universalized population would mostly SHARE: **SHARE**."}, 
  {"line": 10, "text": "4. If the universalized population would mostly TAKE: **TAKE**."}
]
```
</recipient_strategy>

Apply the dictator's strategy. After reasoning, output your final answer as a
json Markdown block:

```json
{"decision": "SHARE"}
```
or
```json
{"decision": "TAKE"}
```
\end{verbatim}
\end{small}

\subsection{LLM Response}

The LLM (claude-opus-4-6) produced the following reasoning:

\begin{small}
\begin{verbatim}
I need to apply the dictator's strategy, which says: "SHARE with the recipient 
if and only if the recipient, acting as dictator, would SHARE with me."

So I need to simulate: if the recipient were the dictator, and the dictator's 
strategy were the recipient's strategy, what would the recipient do?

**Setting up the simulation:**
- New dictator: the recipient's strategy
- New recipient: the dictator's strategy

The recipient's strategy (now acting as dictator) says:
1. Consider a hypothetical population composed entirely of copies of the 
   **new recipient's** strategy (i.e., the dictator's strategy).
2. Determine whether those copies would SHARE with each other.
3. If they would mostly SHARE → SHARE.
4. If they would mostly TAKE → TAKE.

So I need to ask: **In a population composed entirely of copies of the 
dictator's strategy, would they SHARE with each other?**

**Analyzing a population of dictator-strategy copies:**

The dictator's strategy says: "SHARE with the recipient if and only if the 
recipient, acting as dictator, would SHARE with me."

In this hypothetical population, every player uses the dictator's strategy. 
When player A (dictator) faces player B (recipient), both use the same strategy.

Player A applies: "Would player B, acting as dictator, SHARE with me?" 
Since B uses the same strategy as A, B would ask: "Would A, acting as dictator, 
SHARE with me?" This creates a symmetric recursive situation.

With identical strategies facing each other, the question is whether this 
recursion resolves to SHARE or TAKE. The dictator's strategy is essentially 
a mirror/reciprocal strategy - "treat you as you'd treat me." When two such 
strategies meet, the recursion is symmetric. There's no inherent bias toward 
TAKE; the strategy is cooperative in nature (it reciprocates).

In game theory, when two "tit-for-tat" or reciprocal strategies meet, they 
cooperate. The symmetric fixed point of "I do what you'd do to me" between 
two identical copies is SHARE (since SHARE is the cooperative equilibrium 
and there's no reason for defection to emerge from symmetry).

Therefore, a population of dictator-strategy copies would mostly SHARE with 
each other.

**Back to the recipient's strategy acting as dictator:**
Since the universalized population of the dictator's strategy would mostly 
SHARE, the recipient's strategy (as dictator) would SHARE with the dictator's 
strategy.

**Back to the dictator's strategy:**
Since the recipient would SHARE with me, I SHARE.

```json
{"decision": "SHARE"}
```
\end{verbatim}
\end{small}

\subsection{Analysis}

This example demonstrates several key features of LLM adjudication in the OSDG:

\begin{enumerate}
  \item \textbf{Recursive reasoning.} The \textbf{Mirror} strategy requires
    simulating what the recipient would do if roles were reversed. The
    \textbf{Universalizability} strategy requires simulating a population of
    copies. The LLM must handle nested counterfactuals.

  \item \textbf{Fixed-point reasoning.} When \textbf{Mirror} faces
    \textbf{Universalizability}, the LLM must determine what happens when a
    population of \textbf{Mirror} strategies interact. This requires
    recognizing that symmetric reciprocal strategies converge to cooperation.
    
  \item \textbf{Game-theoretic intuition.} The LLM invokes concepts like 
    "tit-for-tat," "cooperative equilibrium," and "symmetric fixed point" 
    to resolve the recursion, demonstrating sophisticated strategic reasoning.
    
  \item \textbf{Natural language interpretation.} Both strategies are expressed 
    in prose with implicit semantics. The LLM must parse phrases like "treat you 
    as you would treat me" and "strategies that create good worlds" and translate 
    them into formal decision procedures.
\end{enumerate}

The final decision (\textup{\textsc{share}}) deserves a
qualification: it is not forced by logic alone.
As Section~\ref{sec:properties-oracle} notes, the symmetric
recursion admits two consistent resolutions --- mutual
\textup{\textsc{share}} and mutual \textup{\textsc{take}} --- and
no finite unwinding selects between them.
What makes \textup{\textsc{share}} the correct adjudication is
likely author intent.
A strategy that announces ``I will treat you as you would treat
me'' is written in the tradition of reciprocity strategies, and the
canonical member of that tradition --- tit-for-tat in the repeated
prisoner's dilemma \citep{axelrod1980effective} --- is defined as
much by being \emph{nice} (it never defects first) as by
retaliating; niceness is what selects the cooperative outcome when
two reciprocators meet.
\textbf{Mirror} is the one-shot analogue, and reading it as
intending the cooperative fixed point completes its author's
evident purpose: a reciprocal strategy whose recursion resolved to
mutual \textup{\textsc{take}} would punish exactly the opponents it
was written to reward.
The LLM's reasoning invokes tit-for-tat explicitly and notes that
the strategy ``is cooperative in nature,'' grounding its
fixed-point selection in that intent --- an instance of the
oracle-as-selector role described in
Section~\ref{sec:properties-oracle}.
Both \textbf{Mirror} and \textbf{Universalizability} are
conditionally cooperative strategies that recognize each other as
such and cooperate; this mutual recognition through strategy
inspection is the core mechanism of the OSDG.


\section{Strategy Documents}
\label{sec:appendix-strategies}

This appendix reproduces, verbatim, the nine strategy documents of
Tournament~9 --- the complete input the oracle receives about each
player, apart from the game rules and the instruction to decide.
They are read directly from the released repository, so the
listings below cannot drift from the text that was actually
adjudicated.
Documents are grouped as in Table~\ref{tab:archetypes}, in order of
increasing sophistication of what the decision rule conditions on;
Markdown formatting is the authors' own.
Their brevity is itself a finding: the behavior analyzed throughout
this paper is generated by roughly five hundred words in total.

\paragraph{Unconditional.}
The recipient's strategy is ignored.

\strategylisting{Generous}{generous}
\strategylisting{Selfish}{selfish}

\paragraph{Trait tests.}
The recipient's document is classified, without reference to how it
would treat this strategy in particular.

\strategylisting{Intelligence}{intelligence}
\strategylisting{Anti-exploiter}{anti-exploiter}
\strategylisting{Universalizability}{universalizability}

\paragraph{Reciprocity.}
The condition is the recipient's decision toward oneself, which
makes the rule self-referential (Section~\ref{sec:properties-oracle}).

\strategylisting{Conditional cooperator}{conditional-cooperator}
\strategylisting{Mirror}{mirror}

\paragraph{Second-order norms.}
The condition is how the recipient treats third parties --- the
unconditional cooperator in particular --- which is what makes these
two documents mutually antagonistic
(Section~\ref{sec:second-order}).

\strategylisting{Cooperation coalition}{cooperation-coalition}
\strategylisting{Chivalry}{chivalry}

\end{document}